\pgfplotsset{compat=1.18}
\pgfplotsset{
  cycle list={
    {red,   solid, mark=*, mark size=0.8pt, mark options={fill=red}},
    {blue,  solid, mark=*, mark size=0.8pt, mark options={fill=blue}},
    {green, solid, mark=*, mark size=0.8pt, mark options={fill=green}},
    {orange,solid, mark=*, mark size=0.8pt, mark options={fill=orange}},
    {purple,solid, mark=*, mark size=0.8pt, mark options={fill=purple}},
    {olive, solid, mark=*, mark size=0.8pt, mark options={fill=olive}},
  }
}
\theoremstyle{plain}
\newtheorem{thm}{Theorem}
\theoremstyle{definition}
\newtheorem{dfn}{Definition}
\title{Kd-tree Based Wasserstein Distance Approximation\\for High-Dimensional Data}
\author{{\hspace{1mm}Kanata Teshigawara}\\
	Institute of Science Tokyo\\
	\texttt{teshigawara.k.46c4@m.isct.ac.jp} \\
	\And
    {\hspace{1mm}Keisho	Oh} \\
	Indeed Recruit Technologies Co., Ltd.\\
	\texttt{oh\_keisho@r.recruit.co.jp} \\
	\And
    {\hspace{1mm}Ken Kobayashi} \\
	Institute of Science Tokyo \\
	\texttt{kobayashi.k@iee.eng.isct.ac.jp} \\
	\And
    {\hspace{1mm}Kazuhide	Nakata} \\
	Institute of Science Tokyo \\
	\texttt{nakata.k.ac@m.titech.ac.jp} \\
}
\begin{document}
\maketitle

\begin{abstract}
The Wasserstein distance is a discrepancy measure between probability distributions, defined by an optimal transport problem. 
It has been used for various tasks such as retrieving similar items in high-dimensional images or text data. 
In retrieval applications, however, the Wasserstein distance is calculated repeatedly, and its cubic time complexity with respect to input size renders it unsuitable for large-scale datasets. 
Recently, tree-based approximation methods have been proposed to address this bottleneck. 
For example, the Flowtree algorithm computes transport on a quadtree and evaluates cost using the ground metric, and clustering-tree approaches have been reported to achieve high accuracy. 
However, these existing trees often incur significant construction time for preprocessing, and crucially, standard quadtrees cannot grow deep enough in high-dimensional spaces, resulting in poor approximation accuracy. 
In this paper, we propose \emph{kd-Flowtree}, a kd-tree-based Wasserstein distance approximation method that uses a kd-tree for data embedding. 
Since kd-trees can grow sufficiently deep and adaptively even in high-dimensional cases, \emph{kd-Flowtree} is capable of maintaining good approximation accuracy for such cases. 
In addition, kd-trees can be constructed quickly than quadtrees, which contributes to reducing the computation time required for nearest neighbor search, including preprocessing. 
We provide a probabilistic upper bound on the nearest-neighbor search accuracy of kd-Flowtree, and show that this bound is independent of the dataset size.
In the numerical experiments, we demonstrated that kd-Flowtree outperformed the existing Wasserstein distance approximation methods for retrieval tasks with real-world data.
\end{abstract}

\section{Introduction}
Optimal Transport (OT) is an optimization problem that computes the minimum cost required to transform one probability distribution into another.  
Under appropriate assumptions, the resulting minimal total transport cost satisfies the axioms of a distance metric and is known as the Wasserstein distance.  
In recent years, the Wasserstein distance has attracted significant interest in data science as a tool for handling complex data such as point clouds and probability measures.  
For example, the Wasserstein GAN (WGAN), which utilizes the Wasserstein distance as its loss function, was proposed by Arjovsky et al.~\cite{arjovsky2017wasserstein}.  
In domain adaptation, methods that compare the joint distributions of features and labels between source and target domains using the Wasserstein distance have outperformed conventional approaches on image recognition tasks~\cite{courty2017joint, damodaran2018deepjdot}. 
Furthermore, in bioinformatics, the Wasserstein distance has been applied to accelerate and enhance sensitivity in detecting distributional differences\cite{schefzik2021fast}, to evaluate intercellular similarity, to correct batch effects, to augment data, and to measure variability in gene expression\cite{huizing2022optimal, tabak2020correcting, zhang2022ideas, li2023mdwgan}.  

In this way, the Wasserstein distance has been applied widely, and one of its most prominent applications is the retrieval of similar items in data such as images and natural language text.  
For document retrieval, the Word Mover’s Distance—defined as the optimal transport cost between distributions in a word embedding space—has been shown to evaluate semantic similarity with high accuracy even for documents with no overlapping vocabulary\cite{kusner2015word}.
As a task‐specific optimization of this idea, the Supervised Word Mover’s Distance introduces pseudo‐label–based distance metric learning to further improve classification and retrieval accuracy~\cite{huang2016supervised}.  
In image data, it has been proposed to incorporate a Wasserstein loss into the ground distance used for measuring semantic similarity of output labels~\cite{frogner2015learning}.  
Furthermore, DeepEMD treats local regions on deep feature maps as masses for optimal matching and achieves high performance in few‐shot image classification tasks~\cite{zhang2020deepemd}, and DeepFace‐EMD performs patch‐wise EMD re‐ranking to enable robust face retrieval under transformations such as masking and rotation~\cite{phan2022deepface}.
These results demonstrate that the Wasserstein distance is an extremely effective method for retrieving similar items in both image and text data. 

In similar‐item retrieval, one must perform a nearest‐neighbor search to find the database item that minimizes the Wasserstein distance to a query.  
This process requires repeated computation of the Wasserstein distance.  
However, it is well known that the exact computation of the Wasserstein distance generally requires $O(s^3 \log s)$ time with respect to the support size $s$ of the distributions~\cite{pele2009fast}.  
Therefore, searching over large‐scale data with $10^5$ or $10^6$ candidate items is impractical, making the use of appropriate approximation methods indispensable.  
A widely used approximation is the Sinkhorn distance, but its computation takes $O(s^2)$ time in $s$~\cite{cuturi2013sinkhorn}, which remains insufficient for large‐scale data.  

Recent studies have focused on embedding data points from the ground space into a tree and utilizing the Wasserstein distance defined on that tree~\cite{le2019tree, takezawa2021supervised, le2024optimal,chen2024learning}.  
The tree‐Wasserstein distance (TWD) is computed by summing the weighted mass transported across each edge and can be calculated in linear time~\cite{le2019tree}.  
Moreover, TWD provides both upper and lower bounds on the Wasserstein distance in the original ground space, and it has been shown that the approximation error can be theoretically controlled by the embedding method~\cite{le2019tree}.  
In other words, with an appropriate embedding, TWD can serve as an effective approximation technique.  
Related to this, Flowtree was proposed to approximate the Wasserstein distance by computing an optimal transport on the embedded tree and then evaluating it using the ground space cost~\cite{backurs2020scalable}.  
This algorithm also runs in linear time and has the property that its search accuracy does not depend on the dataset size.  
The original Flowtree paper uses a quadtree embedding and demonstrates strong theoretical performance.  
However, in practice, it often falls short, leaving room for more suitable embedding methods.  

As an alternative approach, methods that embed data points into a one‐dimensional space (a line) and use a greedy algorithm for approximation have also been proposed.  
Specifically, the support of each distribution is projected onto a line using clustering or random projections, and then points that are close in position on the line are greedily matched to approximate the Wasserstein distance.  
This greedy algorithm itself runs in linear time and achieves relatively high approximation accuracy.  
However, significant information loss during the embedding process can cause accuracy to vary depending on the data.  
Moreover, when constructing embeddings using a clustering tree, the preprocessing step usually requires a substantial amount of time.  

In this paper, we propose \emph{kd-Flowtree}, a method that constructs embeddings with a kd‐tree instead of a quadtree and then applies the Flowtree algorithm for approximation.  
The Flowtree algorithm’s approximation accuracy improves as the depth of the constructed tree increases.  
When processing text data, the ground space is often high‐dimensional, but quadtrees frequently cannot grow sufficiently deep in such spaces, whereas kd‐trees can achieve adequate depth.  
Consequently, kd‐Flowtree retains Flowtree’s linear‐time computation while substantially enhancing nearest‐neighbor search performance.  
Furthermore, kd‐trees can be built more quickly than quadtrees, reducing the total time required for nearest‐neighbor search, including preprocessing.  
We also prove that kd‐Flowtree’s search accuracy remains independent of the dataset size.   

This paper is organized as follows.  
First, we present the formulation of OT and the details of the approximate computation methods for the Wasserstein distance introduced above (Section 2).  
Next, we describe the proposed \emph{kd-Flowtree}, including the specific kd‐tree construction method, its computational time, and the theoretical aspects of its approximation accuracy (Section 3).  
Finally, we conduct numerical experiments on text data to demonstrate the effectiveness of the proposed method for similarity retrieval tasks (Section 4).  

Overall, the contributions of this paper are as follows.  
\begin{itemize}
  \item We propose \emph{kd-Flowtree}, which constructs embeddings using a kd‐tree instead of a quadtree and then computes approximation costs with the same algorithm as Flowtree. The computational time remains comparable, but nearest‐neighbor search accuracy significantly surpasses that of Flowtree. This improvement stems from the fact that kd‐trees can grow deeper than quadtrees in high‐dimensional spaces. Moreover, kd‐trees can be built relatively quickly, keeping preprocessing efficient even in high‐dimensional settings. Clustering trees and quadtrees often incur substantial construction time, so even if the downstream algorithm is fast, preprocessing can be time‐consuming; our proposed method overcomes this drawback.  
  \item We experimentally demonstrate that, for both Flowtree and kd-Flowtree, approximation accuracy in nearest‐neighbor search depends on the depth of the constructed tree, with deeper trees yielding better performance. In high‐dimensional spaces, quadtrees struggle to achieve sufficient depth, whereas kd‐trees do so readily. Consequently, kd-Flowtree exhibits superior search performance, especially for text data where the ground space dimension is typically large.
  \item We prove that, as with Flowtree, the search accuracy of kd-Flowtree remains independent of the dataset size. Although the theoretical guarantees for kd‐Flowtree depend on dimensionality—unlike the quadtree‐based Flowtree—this dependency is minor in practice, and the choice of embedding leads to substantial accuracy improvements.
\end{itemize}

\section{Related Work}
In this section, we present the definition of OT and review previously proposed approximation methods.

\subsection{Definition and Computational Complexity of OT}
Optimal Transport (OT) is an optimization problem that computes the minimum transport cost required to transform one probability distribution into another~\cite{rubner2000earth}. 
In this paper, we consider OT for discrete probability distributions (point sets).  
\begin{dfn}[Optimal transport between point sets~\cite{villani2008optimal}]
Let $\mu,\nu$ be probability distributions on $\mathbb{R}^D$, represented by probability vectors $\boldsymbol{\mu}\in\mathbb{R}_+^n$ and $\boldsymbol{\nu}\in\mathbb{R}_+^m$ satisfying $\sum_{i=1}^n\mu_i = \sum_{j=1}^m\nu_j = 1$.  
Given a cost matrix $\boldsymbol{C}\in\mathbb{R}_+^{n\times m}$, the optimal transport problem between $\mu$ and $\nu$ is defined as
\begin{gather*}
\mathrm{OT}(\mu, \nu)
= \min_{\boldsymbol{P}\in\mathcal{U}(\boldsymbol{\mu}, \boldsymbol{\nu})} \langle \boldsymbol{C}, \boldsymbol{P}\rangle, \\
\text{where}\quad
\mathcal{U}(\boldsymbol{\mu}, \boldsymbol{\nu})
= \left\{\boldsymbol{P}\in\mathbb{R}_+^{n\times m}\,|\,\boldsymbol{P}\boldsymbol{1}_m=\boldsymbol{\mu},\,\boldsymbol{P}^\top\boldsymbol{1}_n=\boldsymbol{\nu}\right\}.
\end{gather*}
\end{dfn}
When the cost matrix $\boldsymbol{C}$ is defined by $C_{ij} = d(\boldsymbol{x}_i,\boldsymbol{y}_j)^p$ for a distance metric $d$, $\mathrm{OT}(\mu,\nu)$ satisfies the axioms of a distance. We then refer to $\mathrm{OT}(\mu,\nu)^{1/p}$ as the $p$-Wasserstein distance.

\begin{dfn}[$p$-Wasserstein distance~\cite{villani2008optimal}]
Let $p>0$.  
Consider the metric space $(\mathbb{R}^D,d)$ and two probability distributions $\mu,\nu$ with supports $Supp(\mu)=\{\boldsymbol{x}_1,\dots,\boldsymbol{x}_n\}$ and $Supp(\nu)=\{\boldsymbol{y}_1,\dots,\boldsymbol{y}_m\}$, represented by vectors $\boldsymbol{\mu}\in\mathbb{R}_+^n$ and $\boldsymbol{\nu}\in\mathbb{R}_+^m$.
Define the cost matrix $\boldsymbol{C}\in\mathbb{R}_+^{n\times m}$ by $C_{ij} = d(\boldsymbol{x}_i,\boldsymbol{y}_j)^p$.  Then the $p$-Wasserstein distance between $\mu$ and $\nu$ is
\begin{equation*}
    W_p(\mu,\nu) \;=\; \mathrm{OT}(\mu,\nu)^{1/p}.
\end{equation*}
\end{dfn}

In this work, we focus on the case $p=1$, referred to simply as the Wasserstein distance and denoted $W_1(\mu,\nu)$. Unless otherwise stated, the distance $d$ is the $\ell_1$ metric.

The computational complexity of the Wasserstein distance is known to be $O(N^3 \log N)$ with respect to the total number of support points $N = n + m$, making repeated calculations on large‐scale data impractical~\cite{pele2009fast}.  
A widely used approximation is the Sinkhorn distance, which is obtained by adding a regularization term to the objective function of the OT problem and solving the modified optimization~\cite{cuturi2013sinkhorn}.  
However, its computation still requires $O(N^2)$ time in $N$~\cite{cuturi2013sinkhorn}, which remains insufficient for large‐scale datasets.  

\subsection{Approximation via Tree Embedding}
As a faster approximation than the Sinkhorn distance, approaches based on embedding into a tree are known.  
In this approach, a probability distribution in a high‐dimensional ground space is first embedded into a tree whose leaves correspond to the data points.  
On the tree, the cost of transporting mass across each edge can be computed simply as a weighted sum, and OT can be expressed as follows.

\begin{dfn}[Tree‐based 1‐Wasserstein Distance~\cite{le2019tree}]
Let $T = (X, V, E)$ be a tree with node set $V$, leaf set $X \subset V$, and edge set $E \subset V \times V$.  
Let $w: E \to \mathbb{R}_+$ be the weight function assigning a positive weight to each edge of $T$, and define the tree distance $t(u,v)$ between two nodes $u,v$ as the sum of weights along the unique path between them in $T$.  
For two weight distributions $\mu,\nu$ on $V$, the tree‐based 1‐Wasserstein distance (TWD) $W_1^t(\mu,\nu)$ is given by
\begin{equation*}
    W_1^t(\mu,\nu) \;=\; \sum_{v\in V} t\left(v, p(v)\right)\,\left|\mu(v)-\nu(v)\right|,    
\end{equation*}

where $p(v)$ denotes the parent of node $v$, and $\mu(v),\nu(v)$ are the weights of distributions $\mu,\nu$ at node $v$, satisfying $\sum_{v\in V}\mu(v)=\sum_{v\in V}\nu(v)=1$.
\end{dfn}

Since the above expression aggregates weights only along the edges of the tree, it can be computed in $O(|V|)$ time, proportional to the number of nodes.  
The resulting TWD has been shown theoretically to act as both an upper and lower bound on the true Wasserstein distance in the original Euclidean space, and by designing the embedding into the tree appropriately, one can control the approximation error~\cite{le2019tree}.  

Flowtree significantly improves the traditional TWD by making small modifications to the tree‐based approach~\cite{backurs2020scalable}.  
Instead of directly computing distances on the tree, Flowtree first embeds the ground space using a quadtree and then finds an optimal transport matching $\widetilde{\boldsymbol{P}}$ based on the tree distance $t(v,u)$.  
It then uses this matching to estimate the true Wasserstein distance under the metric $d$.

\begin{dfn}[Flowtree\cite{backurs2020scalable}]
Embed two distributions $\mu,\nu$ on the metric space $(\mathbb{R}^D,d)$ into a quadtree $T=(X,V,E)$.  
Let their supports in the tree be $\mathrm{Supp}_t(\mu)=\{u_1,\dots,u_n\}\subset V$ and $\mathrm{Supp}_t(\nu)=\{v_1,\dots,v_m\}\subset V$.  
Define the depth $l$ of nodes relative to the support diameter $\Phi$, starting from the root as $\log\Phi+1,\log\Phi,\log\Phi-1,\dots$, and assign each edge connecting a node at depth $l$ to a node at depth $l+1$ a weight of $2^l$.  
Then the optimal transport matching $\widetilde{\boldsymbol{P}}$ is given by
\begin{equation*}
    \widetilde{\boldsymbol{P}}
    = \arg\min_{\boldsymbol{P}\in\mathbb{R}^{n\times m}}
      \sum_{i=1}^n \sum_{j=1}^m P_{ij}\,t(u_i,v_j).    
\end{equation*}
The Flowtree‐based approximate Wasserstein distance $\widetilde{W}_1(\mu,\nu)$ is then
\begin{equation*}
    \widetilde{W}_1(\mu,\nu)
    = \sum_{i=1}^n \sum_{j=1}^m \widetilde{P}_{ij}\,d(\boldsymbol{x}_i,\boldsymbol{y}_j).
\end{equation*}
\end{dfn}

The above sequence of computations can be implemented as shown in Algorithm \ref{alg:flow}.
\begin{figure}[ht]
\begin{algorithm}[H]
  \caption{Flowtree}
  \label{alg:flow}
  \begin{algorithmic}[1]
    \REQUIRE A tree $T$ with root $r$ and height $h$, and distributions $\mu,\nu$
    \STATE Initialize $\widetilde{P}_{ij} \gets 0$ for all $(i,j)$
    \FOR{each node $q$ of $T$, processed from leaves upward}
      \IF{$\exists\,i\text{ such that }q=u_i$}
        \STATE $U_\mu\gets U_\mu \cap \{\mu_i\}$
      \ENDIF
      \IF{$\exists\,j\text{ such that }q=u_j$}
        \STATE $U_\nu\gets U_\nu \cap \{\nu_j\}$
      \ENDIF
      \WHILE{there exist $\mu_i\in U_\mu$ and $\nu_j\in U_\nu$}
        \IF{$\mu_i \geq \nu_j$}
          \STATE $\eta \gets \nu_i$
          \STATE $U_\nu\gets U_\nu \setminus \{\nu_j\}$
        \ELSE
          \STATE $\eta \gets \mu_i$
          \STATE $U_\mu\gets U_\mu \setminus \{\mu_i\}$
        \ENDIF
      \STATE $\widetilde{P}_{ij} \gets \widetilde{P}_{ij} + \eta$
      \STATE $\mu_i \gets \mu_i - \eta$
      \STATE $\nu_j \gets \nu_j - \eta$
      \ENDWHILE
    \ENDFOR
    \RETURN $\widetilde{\boldsymbol{P}}$
  \end{algorithmic}
\end{algorithm}
\end{figure}
Including distance computation, the time complexity for calculating $\widetilde{W}_1(\mu,\nu)$—excluding tree construction—is $O\left(s\,(D + h)\right)$, where $s = \max(n,m)$, $D$ is the dimension, and $h$ is the tree height.  
Moreover, in the quadtree‐based Flowtree, the approximation accuracy of nearest‐neighbor search is independent of the dataset size.  
However, when this method is applied to real datasets—especially those in high‐dimensional spaces—its performance is not always satisfactory.  
This is because quadtrees cannot grow sufficiently deep in high dimensions.  
Thus, there is a need for methods that maintain strong performance even in high‐dimensional spaces.  

\subsection{One-Dimensional Embedding}
As an alternative approach, methods that project the ground space onto a one-dimensional line and compute an approximate transport cost using a greedy algorithm have been proposed.  

First, when the ground space is one-dimensional, OT admits an exact solution via a greedy procedure~\cite{rachev1998mass}. 
Specifically, one sorts the support points of both distributions in ascending order along the line and then repeatedly transports mass from one distribution to the other in that order until all allocations are complete.  

For faster approximation, two greedy algorithms—$k$-Greedy and 1D-ICT—have been introduced~\cite{houry2024fast}. Both algorithms pair nearby support points on the line and transport mass within each pair. 
In $k$-Greedy, increasing $k$ improves accuracy, although even relatively small values of $k$ yield strong performance. 1D-ICT is somewhat less accurate than $k$-Greedy but offers faster computation.  
Two projection schemes are used to embed data into one dimension: one based on a clustering tree and another based on random projections. 
While the greedy matching itself runs very quickly, constructing the clustering tree can incur substantial preprocessing time. 
Moreover, significant information loss during the embedding process can cause the approximation accuracy to vary depending on the dataset.  

\section{Proposed Method}
In this section, we describe the proposed method, \emph{kd-Flowtree}.  
First, we compare the kd‐tree construction procedure, its depth, and computational time against those of quadtrees.  
Next, we detail the computational time and approximation accuracy of the proposed method.

\subsection{kd‐Tree Construction}
A kd‐tree is a data structure that recursively partitions space by alternating splitting dimensions at each node and dividing at the median coordinate value.  
In this paper, we build upon the static kd‐tree construction algorithm of Bentley~\cite{bentley1975multidimensional}, augmenting it with a random shift at each split.
This random shift is analogous to that applied by Backurs et al.~\cite{backurs2020scalable} to quadtrees and is essential for providing theoretical guarantees.  

Consider constructing a kd‐tree on $N$ points in $\mathbb{R}^D$.  
First, select a splitting axis $k$ uniformly at random (with probability $1/D$) and compute the median $m$ of the coordinates along that axis.  
Next, choose a constant $\eta\in[0,0.5)$ and generate a uniform random shift $\delta\in[-\eta L_k,\,\eta L_k]$, where $L_k$ is the width of the cell along axis $k$.  Set $m' = m + \delta$, and partition the point set into  
\begin{equation*}
  X_L = \{\boldsymbol{x}\in\mathbb{R}^D \mid x_k < m'\}, \quad
  X_R = \{\boldsymbol{x}\in\mathbb{R}^D \mid x_k \ge m'\}.
\end{equation*}  
Apply this procedure recursively to each subset until each cell contains at most one point.  The resulting tree has average depth $O(\log N)$ and can be built in $O(N\log N)$ time~\cite{bentley1975multidimensional}, a construction cost that does not depend on the dimension $D$.  

The quadtree used in Flowtree is a data structure constructed by recursively partitioning the space into $2^D$ congruent hypercubes along each coordinate axis.  
Originally introduced for two‐dimensional data~\cite{finkel1974quad}, it has since been applied to high‐dimensional datasets~\cite{faloutsos2002analysis, kratochvil2020generalized}.  
To build the tree, one first determines the smallest hypercube containing all $N$ points.  
If a hypercube contains more than one point, it is split along each axis at its midpoint, yielding $2^D$ sub‐hypercubes.  
For each sub‐hypercube, if it contains more than one point, the same subdivision procedure is applied recursively; regions with exactly one point are not subdivided further.  
In this case, the resulting quadtree has depth $O\left(\tfrac{\log N}{D}\right)$ and can be constructed in $O(ND\log N)$ time~\cite{bern1993approximate}.  

\subsection{Advantages of kd-Flowtree}
The difference between the proposed kd-Flowtree and the prior Flowtree algorithm is the replacement of the quadtree with a kd‐tree for partitioning the ground space; the procedure for computing $\widetilde{W}_1$ using Algorithm~\ref{alg:flow} remains the same. 
However, this minor modification yields significant practical performance improvements.

First, the fact that the tree depth does not depend on the dimension leads to a substantial increase in approximation accuracy. 
The depth of a quadtree is $O\bigl(\tfrac{\log N}{D}\bigr)$, which decreases as the dimension $D$ increases. For example, when $D=50$, a single split creates $2^{50}\approx10^{15}$ regions, causing most quadtrees to terminate after one split. 
However, as shown in the numerical experiments of the next chapter, achieving sufficient tree depth crucially impacts approximation accuracy, with deeper trees yielding better results. 
In contrast, kd‐trees have depth $O(\log N)$ independent of $D$, allowing them to grow deep even in high dimensions and consistently deliver high accuracy.

Moreover, because kd‐trees can be built quickly, they reduce not only the time to compute $W_1$ but also the overall preprocessing time. 
As noted earlier, kd‐tree construction runs in $O(N\log N)$ time independent of the dimension, whereas quadtree construction requires $O(ND\log N)$ time proportional to $D$. 
In particular, for natural language text—where embedding spaces often have dimensions of 50, 100, or higher~\cite{mikolov2013distributed, pennington2014glove, devlin2019bert}—quadtree construction can be time‐consuming. 
In contrast, kd‐trees complete construction relatively quickly, even in high dimensions. 
While some existing approximation methods (e.g., Flowtree or clustering‐tree approaches) achieve high accuracy and fast cost computation, they often incur substantial preprocessing costs. 
In this context, kd-Flowtree excels by offering both fast cost computation and low preprocessing time.

Note that, as mentioned in the previous subsection, the time complexity for computing $\widetilde{W}_1$ is $O\bigl(N\,(D + h)\bigr)$, where $h$ is the tree depth. 
Thus, Flowtree runs in $O\bigl(N\,(D + \tfrac{\log N}{D})\bigr)$ time, whereas kd-Flowtree runs in $O\bigl(N\,(D + \log N)\bigr)$ time, meaning kd-Flowtree may require slightly more time in very high dimensions. 
However, as our numerical experiments confirm, this difference is negligible in practice.   

\subsection{Accuracy of Nearest‐Neighbor Search with kd‐Flowtree}
In this section, we present accuracy guarantees for nearest‐neighbor search using kd‐Flowtree. 
We first review the guarantee for Flowtree and then prove the corresponding result for kd‐trees.

For Flowtree, the following theorem provides a bound on search accuracy under a uniform‐weight assumption using probabilistic inequalities.

\begin{thm}[Upper bound on Flowtree search accuracy under uniform weights~\cite{backurs2020scalable}]
Let $s$ be an integer, and assume that the weights in each distribution’s support are integer multiples of $1/s'$ for some $s'\le s$ (i.e., $1/s',2/s',\dots$). 
Let $\nu$ be a query distribution, $\mu^*$ its true nearest neighbor, and $\mu'$ the neighbor returned by Flowtree. 
Then, with probability at least $0.99$,
\[
   W_1(\mu',\nu) \;\le\; O(\log^2 s)\,\cdot W_1(\mu^*,\nu).
\]
\end{thm}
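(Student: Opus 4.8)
The plan is to reduce the nearest-neighbor guarantee to a single overestimation bound for one fixed pair of distributions, and then to prove that bound from a random-quadtree distortion estimate together with Markov's inequality. Fix the query $\nu$ and write $\widetilde{\boldsymbol{P}}^{\mu}$ for the tree-optimal matching that Flowtree computes for the pair $(\mu,\nu)$. Two purely deterministic facts carry the reduction: (i) since $\widetilde{\boldsymbol{P}}^{\mu}\in\mathcal{U}(\boldsymbol{\mu},\boldsymbol{\nu})$ is a feasible transport plan, $\widetilde W_1(\mu,\nu)=\langle\boldsymbol{C},\widetilde{\boldsymbol{P}}^{\mu}\rangle\ge W_1(\mu,\nu)$, i.e.\ Flowtree never underestimates; and (ii) since the returned $\mu'$ minimises $\widetilde W_1(\cdot,\nu)$ over the dataset and $\mu^*$ lies in the dataset, $\widetilde W_1(\mu',\nu)\le\widetilde W_1(\mu^*,\nu)$. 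Chaining these, $W_1(\mu',\nu)\le\widetilde W_1(\mu',\nu)\le\widetilde W_1(\mu^*,\nu)$, so it suffices to show $\widetilde W_1(\mu^*,\nu)\le O(\log^2 s)\,W_1(\mu^*,\nu)$ with probability at least $0.99$. The key point is that the pair $(\mu^*,\nu)$ does not depend on the random quadtree, so no union bound over the dataset is required --- this is precisely what makes the guarantee independent of the dataset size.

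To bound the overestimation on $(\mu^*,\nu)$, let $\boldsymbol{P}^*$ be an optimal plan for $W_1(\mu^*,\nu)$. Tree-optimality of $\widetilde{\boldsymbol{P}}^{\mu^*}$ gives $\langle t,\widetilde{\boldsymbol{P}}^{\mu^*}\rangle\le\langle t,\boldsymbol{P}^*\rangle$, and one must then pass from the tree cost of $\widetilde{\boldsymbol{P}}^{\mu^*}$ back to its ground cost $\widetilde W_1(\mu^*,\nu)$. I would do this by charging the greedy matching of Algorithm~\ref{alg:flow} level by level: mass matched while processing a node at level $\ell$ already lies in a common cell of that level, so its ground displacement is bounded by the $d$-size of such a cell, and summing over levels recovers a bound in terms of the tree cost. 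Taking expectation over the random shift and using the standard fact that a level-$\ell$ grid hyperplane separates $x$ and $y$ with probability at most $\|x-y\|_1/2^{\ell}$ (a union bound over coordinates), the dyadic geometric sum yields $\mathbb{E}[t(u,v)]\le O(\log\Phi)\,d(x,y)$, hence $\mathbb{E}[\widetilde W_1(\mu^*,\nu)]\le O(\log\Phi)\,\langle d,\boldsymbol{P}^*\rangle=O(\log\Phi)\,W_1(\mu^*,\nu)$. Markov's inequality on the fixed nonnegative random variable $\widetilde W_1(\mu^*,\nu)$ then gives the $0.99$-probability bound with only a constant-factor loss. Finally, the hypothesis that all masses are integer multiples of $1/s'$ with $s'\le s$ lets one snap the supports onto a sufficiently fine dyadic grid, perturbing $W_1(\mu^*,\nu)$ only in lower order while forcing the aspect ratio to $\Phi=\mathrm{poly}(s)$; then $\log\Phi=O(\log s)$, and combining the $O(\log\Phi)$ dyadic scales spanned by the tree with the $O(\log\Phi)$ expected stretch accumulated over those scales produces the claimed $O(\log^2 s)$ factor.

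The step I expect to be the main obstacle is this levelwise accounting, specifically keeping the bound essentially free of the ambient dimension $D$. A naive comparison of the ground and tree costs of $\widetilde{\boldsymbol{P}}^{\mu^*}$ loses a factor $D$, because a quadtree cell of side $2^{\ell}$ has $\ell_1$-diameter $D\cdot 2^{\ell}$ and the tree metric dominates only $\ell_\infty$, not $\ell_1$; removing (or all but removing) this factor forces one to analyse the greedy plan scale by scale and charge its transported mass directly against $W_1(\mu^*,\nu)$ rather than against the tree cost, and to control carefully the error introduced by the grid-snapping reduction. This dimension sensitivity is exactly the weakness that the kd-tree embedding of the present paper is designed to alleviate.
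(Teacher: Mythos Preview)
Your reduction in the first paragraph is correct and is exactly how the paper (following Backurs et al.) organises its step (3): the two deterministic inequalities $W_1(\mu',\nu)\le\widetilde W_1(\mu',\nu)\le\widetilde W_1(\mu^*,\nu)$ reduce everything to an overestimation bound on the fixed pair $(\mu^*,\nu)$, and since that pair is chosen before the random tree is drawn, no union bound over the dataset is required.

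The gap is the ``hence'' in your second paragraph. From $\mathbb{E}[t(u,v)]\le O(\log\Phi)\,d(x,y)$ you may conclude $\mathbb{E}[\langle t,\boldsymbol{P}^*\rangle]\le O(\log\Phi)\,W_1(\mu^*,\nu)$, but $\widetilde W_1(\mu^*,\nu)=\langle d,\widetilde{\boldsymbol{P}}\rangle$, not $\langle t,\boldsymbol{P}^*\rangle$. Closing the chain requires $\langle d,\widetilde{\boldsymbol{P}}\rangle\le C\,\langle t,\widetilde{\boldsymbol{P}}\rangle$, and your levelwise accounting gives only $C=O(D)$: mass matched at a level-$\ell$ node lies in a common cell of $\ell_1$-diameter $D\cdot 2^{\ell}$, while its tree cost is $\Theta(2^{\ell})$. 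You flag this yourself in the third paragraph, but ``charge directly against $W_1$'' is not a concrete argument, and your grid-snapping to force $\Phi=\mathrm{poly}(s)$ is also suspect, since the mass-granularity hypothesis constrains the weights, not the geometry of the support.

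The paper's route (its steps (1)--(2), written for the kd-tree but paralleling the quadtree argument) removes the $D$ factor by using the random shift a \emph{second} time, for a high-probability \emph{lower} bound on the tree distance rather than a deterministic one. For the quadtree, the probability that $x,y$ share a level-$\ell$ cell is $\prod_k(1-\Delta_k/2^{\ell})\le\exp\bigl(-\|x-y\|_1/2^{\ell}\bigr)$; thus with probability $\ge 1-\delta$ they are separated once $2^{\ell}\lesssim\|x-y\|_1/\log(1/\delta)$, giving tree distance $\gtrsim\|x-y\|_1/\log(1/\delta)$ with \emph{no} dimension factor. A union bound over the at most $s^2$ pairs in $\mathrm{Supp}(\mu^*)\times\mathrm{Supp}(\nu)$ with $\delta=c/s^2$ makes this uniform at probability $\ge 0.99$ and costs only $\log(1/\delta)=O(\log s)$. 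The second device is to replace $t$ by the $\varepsilon$-damped tree metric $t'$ (edge at depth $\ell$ weighted $2^{\ell(1-\varepsilon)}$, $\varepsilon=1/\log s$): the damping makes the upper-bound dyadic sum geometrically convergent, yielding $\mathbb{E}[t'(u,v)]\le O(1/\varepsilon)\,\|x-y\|_1^{1-\varepsilon}=O(\log s)\,\|x-y\|_1^{1-\varepsilon}$ with no reference to $\Phi$ at all. The two $O(\log s)$ factors together give $O(\log^2 s)$.
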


This theorem holds, for example, when each support weight is $1/s'$ and $s$ is the maximum support size. Notably, the search accuracy depends only on $s$ and not on the dataset size.

Next, we show that kd‐Flowtree also yields nearest‐neighbor search accuracy independent of the dataset size. 
The proof assumes that in each kd‐tree split, the chosen axis reduces the expected length of the cell along that axis by a factor of $1/2$. 
Such an assumption is common in theoretical analyses of kd‐trees~\cite{bentley1975multidimensional, friedman1977algorithm, samet1984quadtree}. 
Under this assumption, $D$ successive kd‐tree splits can be treated as equivalent to one quadtree split, and the proof then proceeds analogously to the quadtree case.

\begin{thm}[Upper bound on kd-Flowtree search accuracy under uniform weightss~\cite{backurs2020scalable}]
Let $s$ be an integer, and assume the weights in each distribution’s support are integer multiples of $1/s'$ for some $s'\le s$ (i.e., $1/s',2/s',\dots$). For a query distribution $\nu$, let $\mu^*$ be its true nearest neighbor and $\mu'$ the neighbor returned by kd‐Flowtree. 
Then, with probability at least $0.99$,
\begin{equation*}
    W_1(\mu',\nu)\leq O(D^{1-\varepsilon}\log^2 s)\cdot W_1(\mu^*,\nu),
\end{equation*}
where
\begin{equation*}
    \varepsilon = \frac{1}{\log s}
\end{equation*}
\end{thm}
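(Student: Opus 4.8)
### Proof Proposal

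The plan is to reduce the kd-Flowtree guarantee to the quadtree-based Flowtree guarantee (Theorem 1) by showing that, under the stated assumption on expected cell-length contraction, a run of $D$ consecutive kd-tree splits behaves, up to constants, like a single quadtree split. Concretely, I would first fix the support diameter $\Phi$ and, as in \cite{backurs2020scalable}, assume after rescaling that all coordinates lie in a bounded box; the kd-tree with random shifts $\delta \in [-\eta L_k, \eta L_k]$ then plays the role of the randomly shifted quadtree. The key quantity to control is the probability that two support points $\boldsymbol{x}, \boldsymbol{y}$ with $\|\boldsymbol{x}-\boldsymbol{y}\|_1 = r$ are separated at a given level of the tree. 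For the quadtree this probability is $O(r D / (\text{cell width}))$ per level by a union bound over the $D$ axes together with the random shift; for the kd-tree, a single split picks one axis uniformly at random, so the per-split separation probability is smaller by a factor $1/D$, but one needs roughly $D$ splits to shrink all coordinate lengths by a constant factor — this is exactly where the contraction assumption enters, guaranteeing that in expectation each axis gets revisited and halved every $\Theta(D)$ splits.

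The second step is to track how the effective depth and edge weights translate. Since kd-tree edges are assigned geometrically decreasing weights tied to the depth $l$ (mirroring the $2^l$ weights of Flowtree), a block of $D$ kd-tree levels corresponds to one "quadtree level" of weight, and the contraction assumption ensures that the diameter of a cell after $kD$ kd-tree splits is, in expectation, $\Phi \cdot 2^{-k}$, matching the quadtree geometry. I would then invoke the structural part of the proof of Theorem 1 — namely, that Flowtree's returned matching $\widetilde{\boldsymbol{P}}$, evaluated under the true metric $d$, is within an $O(\log^2 s)$ factor of $W_1(\mu^*,\nu)$ whenever the tree's distortion is logarithmically bounded and the weights are integer multiples of $1/s'$. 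The only change is that the tree's expected distortion now carries an extra factor coming from the per-level loss of the factor-$D$ union bound; carefully, this contributes a multiplicative $D^{1-\varepsilon}$ with $\varepsilon = 1/\log s$, because the union bound is applied over $O(\log s)$ effective levels and the failure probability must be kept below $0.01$, which forces one to "pay back" a $D^{1/\log s}$ slack at each of the $\log s$ levels — yielding $D^{1 - 1/\log s} = D^{1-\varepsilon}$ overall rather than the naive $D^{\log s}$ or $D$.

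The main obstacle I anticipate is making the "$D$ kd-splits $=$ one quadtree split" equivalence rigorous in expectation rather than deterministically: a worst-case adversarial point configuration could cause the random axis selection to repeatedly pick already-short axes, so the contraction assumption must be used quantitatively (e.g., via a Chernoff-type bound on how many of the $D$ axes are halved within a window of $cD\log D$ splits) to guarantee the cell geometry concentrates around the quadtree geometry with probability $1 - o(1)$. A secondary technical point is handling the random shift $\delta \in [-\eta L_k, \eta L_k]$ with $\eta < 1/2$: unlike the quadtree's midpoint split, the median-plus-shift rule must still be shown to separate a pair of points at distance $r$ with probability $O(r/L_k)$, which follows because the shift is uniform over an interval of length $2\eta L_k = \Theta(L_k)$, but one must verify the median does not conspire to sit always exactly between the two points. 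Once these two points are pinned down, the remainder is a direct transcription of the Flowtree argument with $\log^2 s$ replaced by $D^{1-\varepsilon}\log^2 s$, and the dataset-size independence is immediate since, as in Theorem 1, $s$ and $D$ are the only parameters entering the bound.
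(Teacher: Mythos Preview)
Your high-level reduction --- treating a block of $D$ kd-tree splits as one quadtree level --- is exactly what the paper does, via the depth function $l(t)=\log\Phi-\lfloor t/D\rfloor$ and the cell-width bound $\Phi\cdot 2^{-(\lfloor t/D\rfloor+1)}\le L_k^{(t)}\le \Phi\cdot 2^{-\lfloor t/D\rfloor}$. But your account of where the factor $D^{1-\varepsilon}$ comes from is not right, and this is the substantive gap. You say one ``pays back a $D^{1/\log s}$ slack at each of the $\log s$ levels,'' which multiplies out to $D$, not $D^{1-\varepsilon}$; and in any case the paper's union bound in the probability analysis is over the $s^2$ support pairs, not over $\log s$ levels. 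The actual mechanism is the $\varepsilon$-snowflake tree metric: one replaces the edge weights $2^l$ by $2^{l(1-\varepsilon)}$ to form a modified tree distance $t'$, and it is this reweighting that converts the $1/D$ appearing in the separation-probability bound (coming from the uniform-random axis choice) into a $(1/D)^{1-\varepsilon}$ in the lower bound $t'(u_i,v_j)\ge \|\boldsymbol{x}_i-\boldsymbol{y}_j\|_1^{1-\varepsilon}/\bigl(C_3\, D^{1-\varepsilon}\log(1/\delta)\bigr)$. The choice $\varepsilon=1/\log s$ is then made so that $r^{1-\varepsilon}$ and $r$ differ only by constants over the relevant range of distances, allowing comparison back to $W_1$. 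Without introducing $t'$ you cannot get the correct exponent on $D$, and invoking Theorem~1 as a black box will not produce it either, since Theorem~1 has no $D$-dependence.

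A secondary point: the Chernoff-type concentration on axis coverage you propose is unnecessary and likely would not close the argument cleanly. The paper simply \emph{assumes} the expected per-axis contraction (it says so explicitly before the theorem statement), which directly yields the deterministic-style sandwich on $L_k^{(t)}$ above; you should use that assumption rather than try to derive a high-probability geometric control from random axis selection, which would introduce extra $\log D$ or polylog factors and complicate the bound. Your concern about the median-plus-shift separating pairs with probability $O(r/L_k)$ is valid but minor --- the uniform shift over an interval of length $\Theta(L_k)$ suffices, as you note.
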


\begin{proof}
    The proof consists of the following three steps:
    \begin{enumerate}
        \item Bounding the probability $p_t(\boldsymbol{x}_i,\boldsymbol{y}_j)$ that two points $\boldsymbol{x}_i,\boldsymbol{y}_j$ remain in the same cell after the $t$-th split.
        \item For $\varepsilon\in(0,1)$, defining a modified tree distance $t'(u_i,v_j)$ by weighting each edge at depth $l$ by $2^{l(1-\varepsilon)}$, and deriving both a lower bound and an upper bound on its expectation.
        \item Using the definitions of $W_1$ and $\widetilde{W}_1$ to relate these bounds to the approximation error.
    \end{enumerate}
    Here, let $l(t)$ denote the depth of the constructed tree after $t$ splits, defined as
    \begin{equation*}
        l(t) = \log\Phi - \left\lfloor\frac{t}{D}\right\rfloor.
    \end{equation*}
    The weight of the edge connecting a node at depth $l(t)$ and a node at depth $l(t+1)$ is given by
    \begin{equation*}
        2^{l(t)} = \Phi \cdot 2^{-\left\lfloor\frac{t}{D}\right\rfloor} = 2^{\log\Phi - \left\lfloor\frac{t}{D}\right\rfloor}.
    \end{equation*}
    The constant $\Phi$ is determined by setting the root node depth to $\log\Phi+1$ and the side length of the initial region (hypercube) before the first split to $2\Phi$.
    Although the algebraic manipulations follow the Flowtree proof, steps (1) and (2) exploit the specific properties of the kd-tree embedding, which distinguishes this proof from that of Flowtree. Step (3) follows identical calculations to Flowtree using the results from (1) and (2), and is therefore omitted. Constants $C_1, C_2, \dots$ and $c_a, c_b$ are used below.

    \noindent\textbf{(1) Probability bound.}
    Let $\Delta_k = |x_{ik}-y_{jk}|$, where $x_{ik}$ and $y_{jk}$ denote the $k$-th components of $\boldsymbol{x}_i$ and $\boldsymbol{y}_j$, respectively. Then,
    \begin{equation*}
        \|\boldsymbol{x}_i-\boldsymbol{y}_j\|_1 = \sum_{k=1}^D \Delta_k.
    \end{equation*}
    In the kd-tree construction, $\boldsymbol{x}_i$ and $\boldsymbol{y}_j$ are separated in a single split only if the chosen splitting axis $k$ lies between $x_{ik}$ and $y_{jk}$.
    Let $L^{(t)}_k$ be the length of the cell along axis $k$ at the $t$-th split (depth $t$). The probability of separation along axis $k$ is $\Delta_k / L^{(t)}_k$. Thus, the probability $p_t(\boldsymbol{x}_i,\boldsymbol{y}_j)$ is given by
    \begin{equation*}
        p_t(\boldsymbol{x}_i,\boldsymbol{y}_j) = \prod_{k=1}^D \left(1 - \frac1D \cdot \frac{\Delta_k}{L^{(t)}_k}\right).
    \end{equation*}
    Based on the assumption that each split reduces the expected width of the region along the chosen axis by half, $L^{(t)}_k$ satisfies
    \begin{equation*}
        \Phi \cdot 2^{-(\lfloor t/D \rfloor + 1)} \;\le\; L^{(t)}_k \;\le\; \Phi \cdot 2^{-\lfloor t/D \rfloor}.
    \end{equation*}
    Using this relation and algebraic manipulations similar to the Flowtree analysis, we obtain
    \begin{equation*}
        1 - \frac{C_{1}}{D}\cdot\frac{\|\boldsymbol{x}_i-\boldsymbol{y}_j\|_1}{2^{l(t)}}
        \;\le\;
        p_t(\boldsymbol{x}_i,\boldsymbol{y}_j)
        \;\le\;
        \exp\!\left(-\frac{C_2}{D}\cdot\frac{\|\boldsymbol{x}_i-\boldsymbol{y}_j\|_1}{2^{l(t)}}\right).
    \end{equation*}

    \noindent\textbf{(2) Tree-distance bound.}
    Set $\varepsilon = 1/\log s$ and choose a sufficiently small constant $c>0$ such that $\delta = c/s^2 > 0$.
    For $\varepsilon \in (0,1)$, define the modified tree distance $t'(u_i,v_j)$ by replacing the standard edge weights $2^l$ with $2^{l(1-\varepsilon)}$.
    Let $t_{xy}$ be the largest integer satisfying
    \begin{equation*}
        2^{l(t)} \;\le\; \frac{C_2}{D}\frac{\|\boldsymbol{x}_i-\boldsymbol{y}_j\|_1}{(\log(1/\delta))^{1/(1-\varepsilon)}}.
    \end{equation*}
    The probability $1 - p_{t_{xy}}(\boldsymbol{x}_i,\boldsymbol{y}_j)$ that $\boldsymbol{x}_i$ and $\boldsymbol{y}_j$ are separated by depth $t_{xy}$ satisfies
    \begin{equation*}
        1 - p_{t_{xy}}(\boldsymbol{x}_i,\boldsymbol{y}_j) \;\ge\; 1 - \delta^{1-\varepsilon}.
    \end{equation*}
    Since there are at most $s^2$ combinations of support pairs, applying a union bound yields
    \begin{equation*}
        \Pr[\forall(\boldsymbol{x}_i,\boldsymbol{y}_j) \text{ separated by } t_{xy}]
        \;\ge\;
        1 - s^2 \cdot \delta^{1-\varepsilon}
        \;=\;
        1 - s^2 \cdot \left(\frac{c}{s^2}\right)^{1-\varepsilon},
    \end{equation*}
    which exceeds $0.99$ for a sufficiently small $c$. Using this result and following the derivation in Flowtree, we have that with probability at least $0.99$,
    \begin{equation*}
        t'(u_i,v_j) \;\ge\; \frac{\|\boldsymbol{x}_i-\boldsymbol{y}_j\|_1^{\,1-\varepsilon}}{C_3 \cdot D^{1-\varepsilon} \log(1/\delta)}.
    \end{equation*}
    Furthermore, the upper bound on the expectation of the tree distance follows similarly as
    \begin{equation*}
        \mathbb{E}\left[t'(u_i,v_j)\right]
        \;\le\;
        C_4 \cdot \log s \; \|\boldsymbol{x}_i-\boldsymbol{y}_j\|_1^{\,1-\varepsilon}.
    \end{equation*}

    \noindent\textbf{(3) Approximation bound.}
    Finally, combining the above results with the definitions of $W_1$ and $\widetilde{W}_1$ proves the theorem.
\end{proof}

To illustrate the impact of the dimensional term $D^{1-\varepsilon}$, we present the values of the exponent $1-\varepsilon$ for various sizes of $s$ in Table~\ref{tab:epsilon_values}. Note that $\varepsilon = 1/\log_2 s$.

\begin{table}[h]
    \centering
    \caption{Values of the exponent $1-\varepsilon$ for different support sizes $s$}
    \label{tab:epsilon_values}
    \vspace{0.2cm}
    \begin{tabular}{lcccccc}
        \toprule
        $s$ & 10 & 50 & 80 & 100 & 200 & 500 \\
        \midrule
        $1 - \varepsilon$ & 0.699 & 0.823 & 0.842 & 0.849 & 0.869 & 0.888 \\
        \bottomrule
    \end{tabular}
\end{table}

The bound now contains a factor of $D$, which might appear to degrade performance in high‐dimensional spaces. 
However, in order to remove data dependence, our analysis—especially the probability bound in step (1)—is highly conservative, and as we will confirm in the next chapter, its practical impact is negligible. 
Instead, the benefit of achieving sufficient tree depth is pronounced, and our experiments show that using a kd‐tree leads to substantially improved approximation accuracy.  

\section{Experiments}
In this section, we present two numerical experiments conducted to evaluate the effectiveness of kd-Flowtree, detailing the experimental setup and discussing the results.

\subsection{Experimental Objectives and Setup}
The first experiment evaluated nearest‐neighbor search performance on real datasets using various approximation methods. 
We used the three text datasets employed in prior work~\cite{backurs2020scalable, houry2024fast}: 20NEWS, Amazon, and BBC. 
20NEWS consists of English postings from 20 different newsgroups. 
AMAZON comprises consumer reviews across various product categories; we used data from the ``Books,'' ``DVD,'' ``Electronics,'' and ``Kitchen \& Housewares'' categories. 
BBC contains English news articles classified into five sections: Entertainment, Politics, Sports, Technology, and Business, and we used all five sections in our experiments. 
Table~\ref{tab:dataset} summarizes the dataset size and the average support size for each dataset.

\begin{table}[htbp]
    \centering
    \caption{Dataset size and average support size}
    \begin{tabular}{ccc} \hline
        Dataset & Size & Average support size \\ \hline
        20NEWS  & 10\,989         & 85.553               \\
        AMAZON  & 15\,735         & 63.552               \\
        BBC     & 1\,780          & 190.133              \\ \hline
    \end{tabular}
    \label{tab:dataset}
\end{table}

As preprocessing, we tokenized and removed stop words, then embed tokens using the pretrained GloVe model~\cite{pennington2014glove}. 
To investigate the effect of embedding dimensionality on approximation performance, we tested dimensions of 50, 100, and 200. 
The methods evaluated for nearest‐neighbor search in this experiment were:
\begin{itemize}
    \item \textbf{Exact}: Computes the exact Wasserstein distance without approximation.
    \item \textbf{Sinkhorn}~\cite{cuturi2013sinkhorn}: A regularized OT approximation with with regularization parameter set to $0.1$ for $D=50,100$ and to $0.01$ for $D=200$, and a maximum of 10 iterations.
    \item \textbf{Flowtree}~\cite{backurs2020scalable}: Constructs a quadtree and computes the approximation cost using Algorithm~\ref{alg:flow}.
    \item \textbf{kd-Flowtree}: The proposed method, which constructs a kd‐tree and computes the approximation cost using Algorithm~\ref{alg:flow}.
    \item \textbf{$1$-Greedy}~\cite{houry2024fast}: The greedy approach $k$-Greedy with $k=1$. We evaluate both embedding schemes proposed in the paper: clustering‐tree embedding ($1$-Greedy(C)) and random-projection embedding ($1$-Greedy(R)).
\end{itemize}

For Exact and Sinkhorn, we used the Python library POT~\cite{flamary2021pot}. 
Other methods were implemented in C++, with evaluation code written in Python. 
We used the $\ell_1$ norm in the ground space, treating both query and target distributions as uniform (each support point has probability $1/s$ for support size $s$).   
Methods were evaluated by Recall@$k$ and total runtime. Recall@$k$ was measured as $k$ varies to assess approximation accuracy. 
Runtimes included both the total time for nearest‐neighbor search (including preprocessing) and the time excluding preprocessing, where preprocessing refers to tree construction or one-dimensional embedding.

The second experiment evaluated nearest‐neighbor search using Algorithm~\ref{alg:flow} while varying the maximum allowed tree depth to investigate its effect on approximation accuracy. 
We used the 20NEWS dataset embedded in 50 dimensions and perform searches with both Flowtree and kd‐Flowtree. 
In our implementation, a very large integer was set as the depth limit to prevent infinite recursion; here, we imposed smaller limits. 
When the limit was reached, all remaining points in the current cell became leaves, and tree construction stopped. We tested depth limits of 2, 5, 8, 10, 20 and 50 and observe how Recall@$k$ changes.

\subsection{Results of Approximate Nearest‐Neighbor Search}
Figure~\ref{fig:exp2-20news}, Figure~\ref{fig:exp2-amazon}, Figure~\ref{fig:exp2-bbc}, and Table~\ref{tab:exp2-time} present the nearest‐neighbor search results for each approximation method. 
Each figure shows results for different embedding dimensions on each dataset. 
The vertical axis represents the ratio of $k$ to the dataset size (e.g., for BBC at $r=0.1$, $k=1780\times0.1$), and the horizontal axis shows Recall@$k$. 
Table~\ref{tab:exp2-time} reports the runtimes for each dataset. For each method, the upper row gives the total runtime including preprocessing, and the lower row gives the runtime excluding preprocessing. 
Note that Exact and Sinkhorn have no preprocessing, so no values are provided in the lower row.  

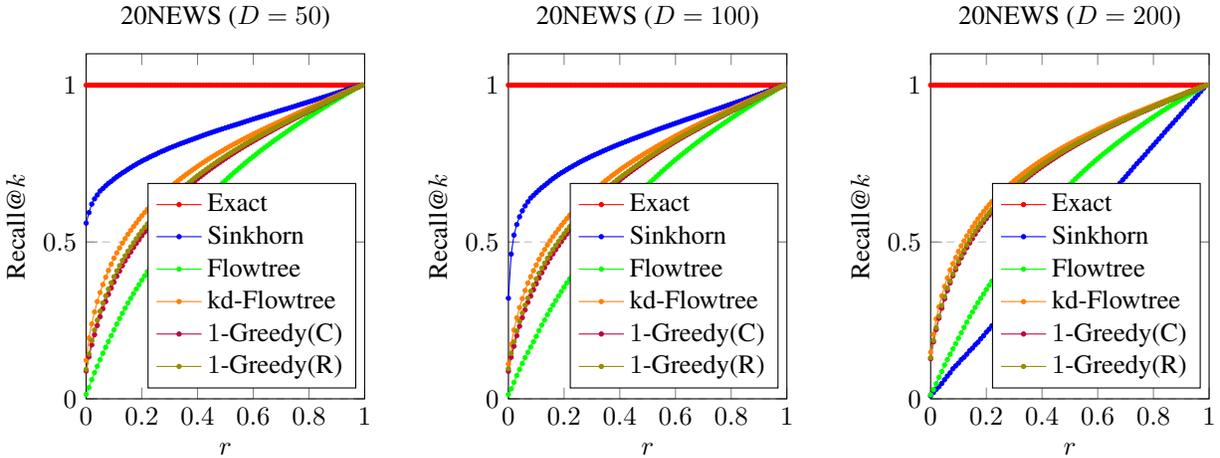
\begin{figure}[htbp]
    \centering
    \begin{subfigure}{0.32\textwidth}
        \centering
        \begin{tikzpicture}
            \begin{axis}[
                width=\linewidth,
                height=0.27\textheight,
                xlabel={$r$},
                ylabel={Recall@$k$},
                title={20NEWS ($D=50$)},
                xmin=0, xmax=1,
                ymin=0,
                xtick={0,0.2,0.4,0.6,0.8,1},
                ymajorgrids=true,
                grid style=dashed,
                legend pos=south east,
                legend cell align=left,
                legend columns=1
              ]
                \addplot+[] table [col sep=comma, header=true, x expr=\coordindex*0.01, y=Exact_mean] {Acr_20news_50d.csv};
                \addlegendentry{Exact}
                \addplot+[] table [col sep=comma, header=true, x expr=\coordindex*0.01, y=Sinkhorn_mean] {Acr_20news_50d.csv};
                \addlegendentry{Sinkhorn}
                \addplot+[] table [col sep=comma, header=true, x expr=\coordindex*0.01, y=FTQT_mean] {Acr_20news_50d.csv};
                \addlegendentry{Flowtree}
                \addplot+[] table [col sep=comma, header=true, x expr=\coordindex*0.01, y=FTKD_mean] {Acr_20news_50d.csv};
                \addlegendentry{kd-Flowtree}
                \addplot+[] table [col sep=comma, header=true, x expr=\coordindex*0.01, y=CLKG_mean] {Acr_20news_50d.csv};
                \addlegendentry{$1$-Greedy(C)}
                \addplot+[] table [col sep=comma, header=true, x expr=\coordindex*0.01, y=SLKG_mean] {Acr_20news_50d.csv};
                \addlegendentry{$1$-Greedy(R)}
            \end{axis}
        \end{tikzpicture}
    \end{subfigure}
    \hfill
    \begin{subfigure}{0.32\textwidth}
        \centering
        \begin{tikzpicture}
            \begin{axis}[
                width=\linewidth,
                height=0.27\textheight,
                xlabel={$r$},
                ylabel={Recall@$k$},
                title={20NEWS ($D=100$)},
                xmin=0, xmax=1,
                ymin=0,
                xtick={0,0.2,0.4,0.6,0.8,1},
                ymajorgrids=true,
                grid style=dashed,
                legend pos=south east,
                legend cell align=left,
                legend columns=1
              ]
                \addplot+[] table [col sep=comma, header=true, x expr=\coordindex*0.01, y=Exact_mean] {Acr_20news_100d.csv};
                \addlegendentry{Exact}
                \addplot+[] table [col sep=comma, header=true, x expr=\coordindex*0.01, y=Sinkhorn_mean] {Acr_20news_100d.csv};
                \addlegendentry{Sinkhorn}
                \addplot+[] table [col sep=comma, header=true, x expr=\coordindex*0.01, y=FTQT_mean] {Acr_20news_100d.csv};
                \addlegendentry{Flowtree}
                \addplot+[] table [col sep=comma, header=true, x expr=\coordindex*0.01, y=FTKD_mean] {Acr_20news_100d.csv};
                \addlegendentry{kd-Flowtree}
                \addplot+[] table [col sep=comma, header=true, x expr=\coordindex*0.01, y=CLKG_mean] {Acr_20news_100d.csv};
                \addlegendentry{$1$-Greedy(C)}
                \addplot+[] table [col sep=comma, header=true, x expr=\coordindex*0.01, y=SLKG_mean] {Acr_20news_100d.csv};
                \addlegendentry{$1$-Greedy(R)}
            \end{axis}
        \end{tikzpicture}
    \end{subfigure}
    \hfill
    \begin{subfigure}{0.32\textwidth}
        \centering
        \begin{tikzpicture}
            \begin{axis}[
                width=\linewidth,
                height=0.27\textheight,
                xlabel={$r$},
                ylabel={Recall@$k$},
                title={20NEWS ($D=200$)},
                xmin=0, xmax=1,
                ymin=0,
                xtick={0,0.2,0.4,0.6,0.8,1},
                ymajorgrids=true,
                grid style=dashed,
                legend pos=south east,
                legend cell align=left,
                legend columns=1
              ]
                \addplot+[] table [col sep=comma, header=true, x expr=\coordindex*0.01, y=Exact_mean] {Acr_20news_200d.csv};
                \addlegendentry{Exact}
                \addplot+[] table [col sep=comma, header=true, x expr=\coordindex*0.01, y=Sinkhorn_mean] {Acr_20news_200d.csv};
                \addlegendentry{Sinkhorn}
                \addplot+[] table [col sep=comma, header=true, x expr=\coordindex*0.01, y=FTQT_mean] {Acr_20news_200d.csv};
                \addlegendentry{Flowtree}
                \addplot+[] table [col sep=comma, header=true, x expr=\coordindex*0.01, y=FTKD_mean] {Acr_20news_200d.csv};
                \addlegendentry{kd-Flowtree}
                \addplot+[] table [col sep=comma, header=true, x expr=\coordindex*0.01, y=CLKG_mean] {Acr_20news_200d.csv};
                \addlegendentry{$1$-Greedy(C)}
                \addplot+[] table [col sep=comma, header=true, x expr=\coordindex*0.01, y=SLKG_mean] {Acr_20news_200d.csv};
                \addlegendentry{$1$-Greedy(R)}
            \end{axis}
        \end{tikzpicture}
    \end{subfigure}
    \caption{Recall@$k$ for 20NEWS}
    \label{fig:exp2-20news}
\end{figure}

\begin{figure}[htbp]
    \centering
    \begin{subfigure}{0.32\textwidth}
        \centering
        \begin{tikzpicture}
            \begin{axis}[
                width=\linewidth,
                height=0.27\textheight,
                xlabel={$r$},
                ylabel={Recall@$k$},
                title={AMAZON ($D=50$)},
                xmin=0, xmax=1,
                ymin=0,
                xtick={0,0.2,0.4,0.6,0.8,1},
                ymajorgrids=true,
                grid style=dashed,
                legend pos=south east,
                legend cell align=left,
                legend columns=1
              ]
                \addplot+[] table [col sep=comma, header=true, x expr=\coordindex*0.01, y=Exact_mean] {Acr_amazon_50d.csv};
                \addlegendentry{Exact}
                \addplot+[] table [col sep=comma, header=true, x expr=\coordindex*0.01, y=Sinkhorn_mean] {Acr_amazon_50d.csv};
                \addlegendentry{Sinkhorn}
                \addplot+[] table [col sep=comma, header=true, x expr=\coordindex*0.01, y=FTQT_mean] {Acr_amazon_50d.csv};
                \addlegendentry{Flowtree}
                \addplot+[] table [col sep=comma, header=true, x expr=\coordindex*0.01, y=FTKD_mean] {Acr_amazon_50d.csv};
                \addlegendentry{kd-Flowtree}
                \addplot+[] table [col sep=comma, header=true, x expr=\coordindex*0.01, y=CLKG_mean] {Acr_amazon_50d.csv};
                \addlegendentry{$1$-Greedy(C)}
                \addplot+[] table [col sep=comma, header=true, x expr=\coordindex*0.01, y=SLKG_mean] {Acr_amazon_50d.csv};
                \addlegendentry{$1$-Greedy(R)}
            \end{axis}
        \end{tikzpicture}
    \end{subfigure}
    \hfill
    \begin{subfigure}{0.32\textwidth}
        \centering
        \begin{tikzpicture}
            \begin{axis}[
                width=\linewidth,
                height=0.27\textheight,
                xlabel={$r$},
                ylabel={Recall@$k$},
                title={AMAZON ($D=100$)},
                xmin=0, xmax=1,
                ymin=0,
                xtick={0,0.2,0.4,0.6,0.8,1},
                ymajorgrids=true,
                grid style=dashed,
                legend pos=south east,
                legend cell align=left,
                legend columns=1
              ]
                \addplot+[] table [col sep=comma, header=true, x expr=\coordindex*0.01, y=Exact_mean] {Acr_amazon_100d.csv};
                \addlegendentry{Exact}
                \addplot+[] table [col sep=comma, header=true, x expr=\coordindex*0.01, y=Sinkhorn_mean] {Acr_amazon_100d.csv};
                \addlegendentry{Sinkhorn}
                \addplot+[] table [col sep=comma, header=true, x expr=\coordindex*0.01, y=FTQT_mean] {Acr_amazon_100d.csv};
                \addlegendentry{Flowtree}
                \addplot+[] table [col sep=comma, header=true, x expr=\coordindex*0.01, y=FTKD_mean] {Acr_amazon_100d.csv};
                \addlegendentry{kd-Flowtree}
                \addplot+[] table [col sep=comma, header=true, x expr=\coordindex*0.01, y=CLKG_mean] {Acr_amazon_100d.csv};
                \addlegendentry{$1$-Greedy(C)}
                \addplot+[] table [col sep=comma, header=true, x expr=\coordindex*0.01, y=SLKG_mean] {Acr_amazon_100d.csv};
                \addlegendentry{$1$-Greedy(R)}
            \end{axis}
        \end{tikzpicture}
    \end{subfigure}
    \hfill
    \begin{subfigure}{0.32\textwidth}
        \centering
        \begin{tikzpicture}
            \begin{axis}[
                width=\linewidth,
                height=0.27\textheight,
                xlabel={$r$},
                ylabel={Recall@$k$},
                title={AMAZON ($D=200$)},
                xmin=0, xmax=1,
                ymin=0,
                xtick={0,0.2,0.4,0.6,0.8,1},
                ymajorgrids=true,
                grid style=dashed,
                legend pos=south east,
                legend cell align=left,
                legend columns=1
              ]
                \addplot+[] table [col sep=comma, header=true, x expr=\coordindex*0.01, y=Exact_mean] {Acr_amazon_200d.csv};
                \addlegendentry{Exact}
                \addplot+[] table [col sep=comma, header=true, x expr=\coordindex*0.01, y=Sinkhorn_mean] {Acr_amazon_200d.csv};
                \addlegendentry{Sinkhorn}
                \addplot+[] table [col sep=comma, header=true, x expr=\coordindex*0.01, y=FTQT_mean] {Acr_amazon_200d.csv};
                \addlegendentry{Flowtree}
                \addplot+[] table [col sep=comma, header=true, x expr=\coordindex*0.01, y=FTKD_mean] {Acr_amazon_200d.csv};
                \addlegendentry{kd-Flowtree}
                \addplot+[] table [col sep=comma, header=true, x expr=\coordindex*0.01, y=CLKG_mean] {Acr_amazon_200d.csv};
                \addlegendentry{$1$-Greedy(C)}
                \addplot+[] table [col sep=comma, header=true, x expr=\coordindex*0.01, y=SLKG_mean] {Acr_amazon_200d.csv};
                \addlegendentry{$1$-Greedy(R)}
            \end{axis}
        \end{tikzpicture}
    \end{subfigure}
    \caption{Recall@$k$ for AMAZON}
    \label{fig:exp2-amazon}
\end{figure}
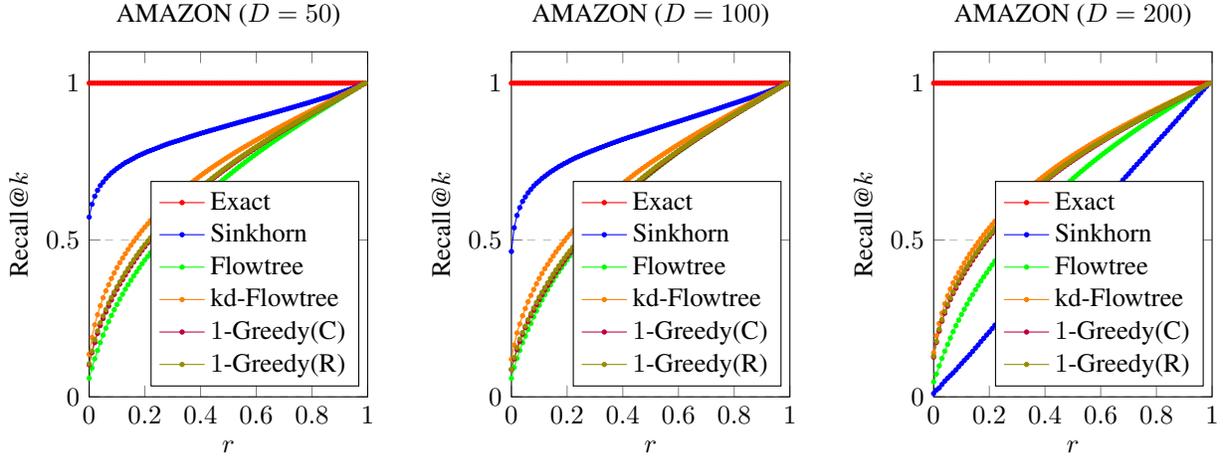

\begin{figure}[htbp]
    \centering
    \begin{subfigure}{0.32\textwidth}
        \centering
        \begin{tikzpicture}
            \begin{axis}[
                width=\linewidth,
                height=0.27\textheight,
                xlabel={$r$},
                ylabel={Recall@$k$},
                title={BBC ($D=50$)},
                xmin=0, xmax=1,
                ymin=0,
                xtick={0,0.2,0.4,0.6,0.8,1},
                ymajorgrids=true,
                grid style=dashed,
                legend pos=south east,
                legend cell align=left,
                legend columns=1
              ]
                \addplot+[] table [col sep=comma, header=true, x expr=\coordindex*0.01, y=Exact_mean] {Acr_bbc_50d.csv};
                \addlegendentry{Exact}
                \addplot+[] table [col sep=comma, header=true, x expr=\coordindex*0.01, y=Sinkhorn_mean] {Acr_bbc_50d.csv};
                \addlegendentry{Sinkhorn}
                \addplot+[] table [col sep=comma, header=true, x expr=\coordindex*0.01, y=FTQT_mean] {Acr_bbc_50d.csv};
                \addlegendentry{Flowtree}
                \addplot+[] table [col sep=comma, header=true, x expr=\coordindex*0.01, y=FTKD_mean] {Acr_bbc_50d.csv};
                \addlegendentry{kd-Flowtree}
                \addplot+[] table [col sep=comma, header=true, x expr=\coordindex*0.01, y=CLKG_mean] {Acr_bbc_50d.csv};
                \addlegendentry{$1$-Greedy(C)}
                \addplot+[] table [col sep=comma, header=true, x expr=\coordindex*0.01, y=SLKG_mean] {Acr_bbc_50d.csv};
                \addlegendentry{$1$-Greedy(R)}
            \end{axis}
        \end{tikzpicture}
    \end{subfigure}
    \hfill
    \begin{subfigure}{0.32\textwidth}
        \centering
        \begin{tikzpicture}
            \begin{axis}[
                width=\linewidth,
                height=0.27\textheight,
                xlabel={$r$},
                ylabel={Recall@$k$},
                title={BBC ($D=100$)},
                xmin=0, xmax=1,
                ymin=0,
                xtick={0,0.2,0.4,0.6,0.8,1},
                ymajorgrids=true,
                grid style=dashed,
                legend pos=south east,
                legend cell align=left,
                legend columns=1
              ]
                \addplot+[] table [col sep=comma, header=true, x expr=\coordindex*0.01, y=Exact_mean] {Acr_bbc_100d.csv};
                \addlegendentry{Exact}
                \addplot+[] table [col sep=comma, header=true, x expr=\coordindex*0.01, y=Sinkhorn_mean] {Acr_bbc_100d.csv};
                \addlegendentry{Sinkhorn}
                \addplot+[] table [col sep=comma, header=true, x expr=\coordindex*0.01, y=FTQT_mean] {Acr_bbc_100d.csv};
                \addlegendentry{Flowtree}
                \addplot+[] table [col sep=comma, header=true, x expr=\coordindex*0.01, y=FTKD_mean] {Acr_bbc_100d.csv};
                \addlegendentry{kd-Flowtree}
                \addplot+[] table [col sep=comma, header=true, x expr=\coordindex*0.01, y=CLKG_mean] {Acr_bbc_100d.csv};
                \addlegendentry{$1$-Greedy(C)}
                \addplot+[] table [col sep=comma, header=true, x expr=\coordindex*0.01, y=SLKG_mean] {Acr_bbc_100d.csv};
                \addlegendentry{$1$-Greedy(R)}
            \end{axis}
        \end{tikzpicture}
    \end{subfigure}
    \hfill
    \begin{subfigure}{0.32\textwidth}
        \centering
        \begin{tikzpicture}
            \begin{axis}[
                width=\linewidth,
                height=0.27\textheight,
                xlabel={$r$},
                ylabel={Recall@$k$},
                title={BBC ($D=200$)},
                xmin=0, xmax=1,
                ymin=0,
                xtick={0,0.2,0.4,0.6,0.8,1},
                ymajorgrids=true,
                grid style=dashed,
                legend pos=south east,
                legend cell align=left,
                legend columns=1
              ]
                \addplot+[] table [col sep=comma, header=true, x expr=\coordindex*0.01, y=Exact_mean] {Acr_bbc_200d.csv};
                \addlegendentry{Exact}
                \addplot+[] table [col sep=comma, header=true, x expr=\coordindex*0.01, y=Sinkhorn_mean] {Acr_bbc_200d.csv};
                \addlegendentry{Sinkhorn}
                \addplot+[] table [col sep=comma, header=true, x expr=\coordindex*0.01, y=FTQT_mean] {Acr_bbc_200d.csv};
                \addlegendentry{Flowtree}
                \addplot+[] table [col sep=comma, header=true, x expr=\coordindex*0.01, y=FTKD_mean] {Acr_bbc_200d.csv};
                \addlegendentry{kd-Flowtree}
                \addplot+[] table [col sep=comma, header=true, x expr=\coordindex*0.01, y=CLKG_mean] {Acr_bbc_200d.csv};
                \addlegendentry{$1$-Greedy(C)}
                \addplot+[] table [col sep=comma, header=true, x expr=\coordindex*0.01, y=SLKG_mean] {Acr_bbc_200d.csv};
                \addlegendentry{$1$-Greedy(R)}
            \end{axis}
        \end{tikzpicture}
    \end{subfigure}
    \caption{Recall@$k$ for BBC}
    \label{fig:exp2-bbc}
\end{figure}
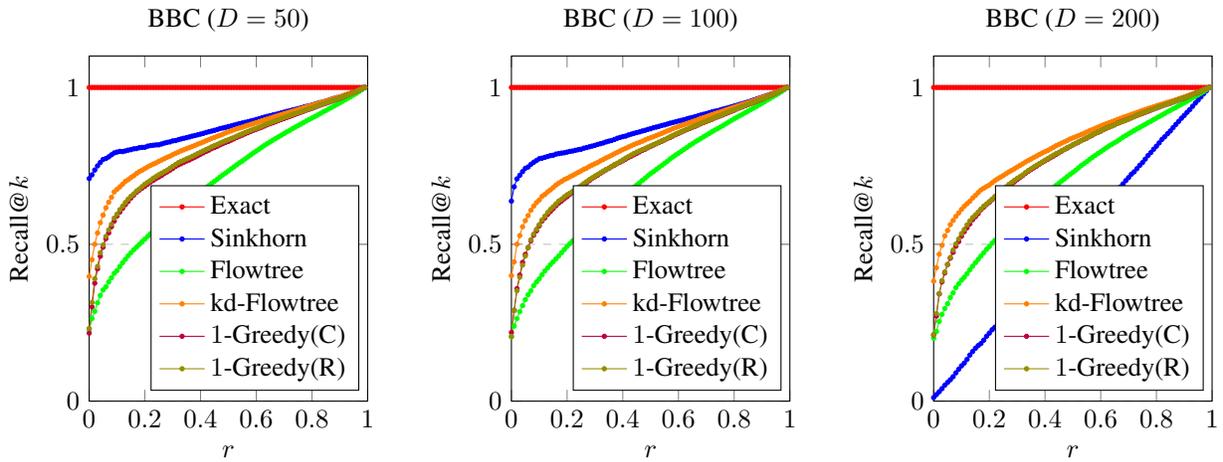

\begin{table}[htbp]
  \centering
  \caption{Execution Time (upper row: total runtime; lower row: time excluding preprocessing)}
  \begin{subtable}{\columnwidth}
    \centering
    \caption{Runtimes for $D=50$}
    \begin{tabular}{cccc} \hline
        Method & 20NEWS & AMAZON & BBC \\ \hline
        Exact &  
        \begin{tabular}{c}7.539\,s\\---\end{tabular} & 
        \begin{tabular}{c}6.281\,s\\---\end{tabular} & 
        \begin{tabular}{c}5.728\,s\\---\end{tabular}\\
        Sinkhorn &  
        \begin{tabular}{c}3.517\,s\\---\end{tabular} & 
        \begin{tabular}{c}4.140\,s\\---\end{tabular} & 
        \begin{tabular}{c}1.744\,s\\---\end{tabular}\\
        Flowtree &  
        \begin{tabular}{c}2.782\,s\\\textbf{0.946}\,s\end{tabular} & 
        \begin{tabular}{c}2.985\,s\\\textbf{0.849}\,s\end{tabular} & 
        \begin{tabular}{c}1.763\,s\\\textbf{0.938}\,s\end{tabular}\\
        kd-Flowtree &  
        \begin{tabular}{c}\textbf{2.335}\,s\\\textbf{0.948}\,s\end{tabular} & 
        \begin{tabular}{c}\textbf{2.235}\,s\\\textbf{0.832}\,s\end{tabular} & 
        \begin{tabular}{c}\textbf{1.570}\,s\\\textbf{0.942}\,s\end{tabular}\\
        $1$-Greedy(C) &  
        \begin{tabular}{c}4.504\,s\\0.993\,s\end{tabular} & 
        \begin{tabular}{c}4.048\,s\\0.892\,s\end{tabular} & 
        \begin{tabular}{c}2.425\,s\\0.976\,s\end{tabular}\\
        $1$-Greedy(R) &  
        \begin{tabular}{c}\textbf{1.028}\,s\\\textbf{0.920}\,s\end{tabular} & 
        \begin{tabular}{c}\textbf{0.941}\,s\\\textbf{0.821}\,s\end{tabular} & 
        \begin{tabular}{c}\textbf{0.980}\,s\\\textbf{0.940}\,s\end{tabular}\\\hline
    \end{tabular}
  \end{subtable}\\
  \vspace{1.5em}
  \begin{subtable}{\columnwidth}
    \centering
    \caption{Runtimes for $D=100$}
    \begin{tabular}{cccc} \hline
        Method & 20NEWS & AMAZON & BBC \\ \hline
        Exact &  
        \begin{tabular}{c}8.843\,s\\---\end{tabular} & 
        \begin{tabular}{c}9.349\,s\\---\end{tabular} & 
        \begin{tabular}{c}6.623\,s\\---\end{tabular}\\
        Sinkhorn &  
        \begin{tabular}{c}4.795\,s\\---\end{tabular} & 
        \begin{tabular}{c}5.966\,s\\---\end{tabular} & 
        \begin{tabular}{c}2.957\,s\\---\end{tabular}\\
        Flowtree &  
        \begin{tabular}{c}4.470\,s\\\textbf{1.754}\,s\end{tabular} & 
        \begin{tabular}{c}5.592\,s\\\textbf{2.004}\,s\end{tabular} & 
        \begin{tabular}{c}2.834\,s\\\textbf{1.691}\,s\end{tabular}\\
        kd-Flowtree &  
        \begin{tabular}{c}\textbf{3.156}\,s\\\textbf{1.755}\,s\end{tabular} & 
        \begin{tabular}{c}\textbf{3.626}\,s\\\textbf{1.987}\,s\end{tabular} & 
        \begin{tabular}{c}\textbf{2.286}\,s\\\textbf{1.694}\,s\end{tabular}\\
        $1$-Greedy(C) &  
        \begin{tabular}{c}8.039\,s\\1.780\,s\end{tabular} & 
        \begin{tabular}{c}8.512\,s\\2.057\,s\end{tabular} & 
        \begin{tabular}{c}4.191\,s\\1.726\,s\end{tabular}\\
        $1$-Greedy(R) &  
        \begin{tabular}{c}\textbf{1.859}\,s\\\textbf{1.729}\,s\end{tabular} & 
        \begin{tabular}{c}\textbf{2.128}\,s\\\textbf{1.972}\,s\end{tabular} & 
        \begin{tabular}{c}\textbf{1.731}\,s\\\textbf{1.687}\,s\end{tabular}\\\hline
    \end{tabular}
  \end{subtable}\\
  \vspace{1.5em}
  \begin{subtable}{\columnwidth}
    \centering
    \caption{Runtimes for $D=200$}
    \begin{tabular}{cccc} \hline
        Method & 20NEWS & AMAZON & BBC \\ \hline
        Exact &  
        \begin{tabular}{c}11.579\,s\\---\end{tabular} & 
        \begin{tabular}{c}9.787\,s\\---\end{tabular} & 
        \begin{tabular}{c}9.245\,s\\---\end{tabular}\\
        Sinkhorn &  
        \begin{tabular}{c}5.730\,s\\---\end{tabular} & 
        \begin{tabular}{c}5.248\,s\\---\end{tabular} & 
        \begin{tabular}{c}5.339\,s\\---\end{tabular}\\
        Flowtree &  
        \begin{tabular}{c}7.717\,s\\\textbf{3.331}\,s\end{tabular} & 
        \begin{tabular}{c}8.173\,s\\\textbf{2.978}\,s\end{tabular} & 
        \begin{tabular}{c}5.163\,s\\\textbf{3.330}\,s\end{tabular}\\
        kd-Flowtree &  
        \begin{tabular}{c}\textbf{4.725}\,s\\\textbf{3.338}\,s\end{tabular} & 
        \begin{tabular}{c}\textbf{4.375}\,s\\\textbf{2.958}\,s\end{tabular} & 
        \begin{tabular}{c}\textbf{3.924}\,s\\\textbf{3.330}\,s\end{tabular}\\
        $1$-Greedy(C) &  
        \begin{tabular}{c}19.372\,s\\3.384\,s\end{tabular} & 
        \begin{tabular}{c}16.929\,s\\3.040\,s\end{tabular} & 
        \begin{tabular}{c}9.738\,s\\3.363\,s\end{tabular}\\
        $1$-Greedy(R) &  
        \begin{tabular}{c}\textbf{3.466}\,s\\\textbf{3.305}\,s\end{tabular} & 
        \begin{tabular}{c}\textbf{3.131}\,s\\\textbf{2.945}\,s\end{tabular} & 
        \begin{tabular}{c}\textbf{3.368}\,s\\\textbf{3.314}\,s\end{tabular}\\\hline
    \end{tabular}
  \end{subtable}
  \label{tab:exp2-time}
\end{table}

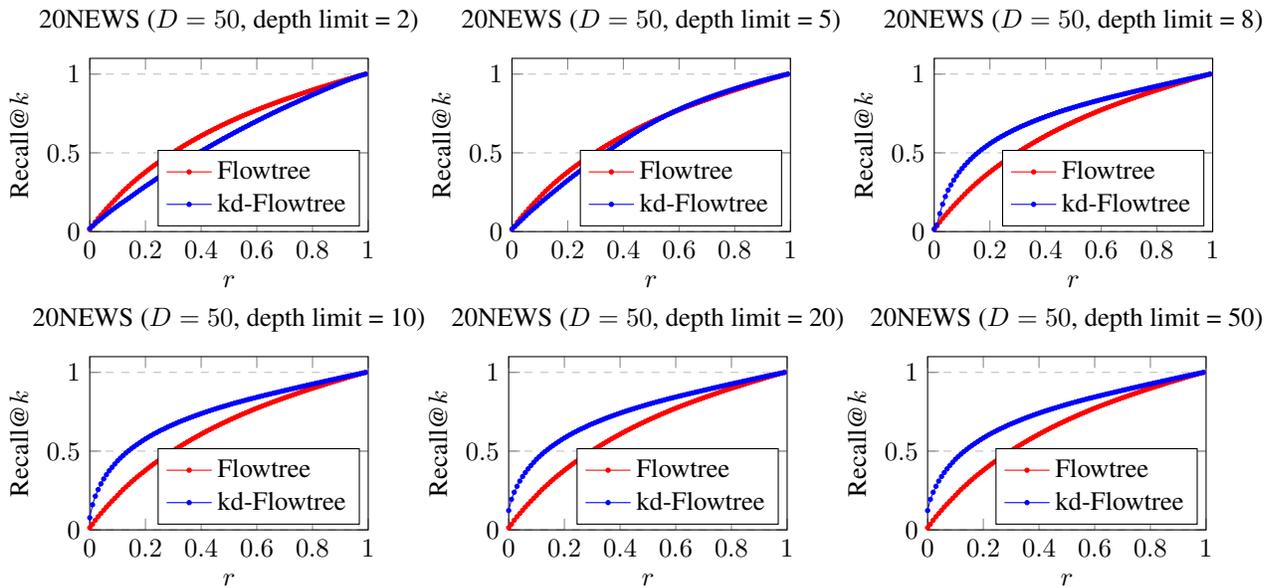
\begin{figure}[htbp]
    \centering
    \begin{subfigure}{0.32\textwidth}
        \centering
        \begin{tikzpicture}
            \begin{axis}[
                width=\columnwidth,
                height=0.17\textheight,
                xlabel={$r$},
                ylabel={Recall@$k$},
                title={20NEWS ($D=50$, depth limit = 2)},
                xmin=0, xmax=1,
                ymin=0,
                xtick={0,0.2,0.4,0.6,0.8,1},
                ymajorgrids=true,
                grid style=dashed,
                legend pos=south east,
                legend cell align=left,
                legend columns=1
              ]
                \addplot+[] table [col sep=comma, header=true, x expr=\coordindex*0.01, y=FTQT_mean] {Acr_50d_max2.csv};
                \addlegendentry{Flowtree}
                \addplot+[] table [col sep=comma, header=true, x expr=\coordindex*0.01, y=FTKD_mean] {Acr_50d_max2.csv};
                \addlegendentry{kd-Flowtree}
            \end{axis}
        \end{tikzpicture}
    \end{subfigure}
    \hfill
    \begin{subfigure}{0.32\textwidth}
        \centering
        \begin{tikzpicture}
            \begin{axis}[
                width=\columnwidth,
                height=0.17\textheight,
                xlabel={$r$},
                ylabel={Recall@$k$},
                title={20NEWS ($D=50$, depth limit = 5)},
                xmin=0, xmax=1,
                ymin=0,
                xtick={0,0.2,0.4,0.6,0.8,1},
                ymajorgrids=true,
                grid style=dashed,
                legend pos=south east,
                legend cell align=left,
                legend columns=1
              ]
                \addplot+[] table [col sep=comma, header=true, x expr=\coordindex*0.01, y=FTQT_mean] {Acr_50d_max5.csv};
                \addlegendentry{Flowtree}
                \addplot+[] table [col sep=comma, header=true, x expr=\coordindex*0.01, y=FTKD_mean] {Acr_50d_max5.csv};
                \addlegendentry{kd-Flowtree}
            \end{axis}
        \end{tikzpicture}
    \end{subfigure}
    \hfill
    \begin{subfigure}{0.32\textwidth}
        \centering
        \begin{tikzpicture}
            \begin{axis}[
                width=\columnwidth,
                height=0.17\textheight,
                xlabel={$r$},
                ylabel={Recall@$k$},
                title={20NEWS ($D=50$, depth limit = 8)},
                xmin=0, xmax=1,
                ymin=0,
                xtick={0,0.2,0.4,0.6,0.8,1},
                ymajorgrids=true,
                grid style=dashed,
                legend pos=south east,
                legend cell align=left,
                legend columns=1
              ]
                \addplot+[] table [col sep=comma, header=true, x expr=\coordindex*0.01, y=FTQT_mean] {Acr_50d_max8.csv};
                \addlegendentry{Flowtree}
                \addplot+[] table [col sep=comma, header=true, x expr=\coordindex*0.01, y=FTKD_mean] {Acr_50d_max8.csv};
                \addlegendentry{kd-Flowtree}
            \end{axis}
        \end{tikzpicture}
    \end{subfigure}
    \vspace{2em}
    \begin{subfigure}{0.32\textwidth}
        \centering
        \begin{tikzpicture}
            \begin{axis}[
                width=\columnwidth,
                height=0.17\textheight,
                xlabel={$r$},
                ylabel={Recall@$k$},
                title={20NEWS ($D=50$, depth limit = 10)},
                xmin=0, xmax=1,
                ymin=0,
                xtick={0,0.2,0.4,0.6,0.8,1},
                ymajorgrids=true,
                grid style=dashed,
                legend pos=south east,
                legend cell align=left,
                legend columns=1
              ]
                \addplot+[] table [col sep=comma, header=true, x expr=\coordindex*0.01, y=FTQT_mean] {Acr_50d_max10.csv};
                \addlegendentry{Flowtree}
                \addplot+[] table [col sep=comma, header=true, x expr=\coordindex*0.01, y=FTKD_mean] {Acr_50d_max10.csv};
                \addlegendentry{kd-Flowtree}
            \end{axis}
        \end{tikzpicture}
    \end{subfigure}
    \hfill
    \begin{subfigure}{0.32\textwidth}
        \centering
        \begin{tikzpicture}
            \begin{axis}[
                width=\columnwidth,
                height=0.17\textheight,
                xlabel={$r$},
                ylabel={Recall@$k$},
                title={20NEWS ($D=50$, depth limit = 20)},
                xmin=0, xmax=1,
                ymin=0,
                xtick={0,0.2,0.4,0.6,0.8,1},
                ymajorgrids=true,
                grid style=dashed,
                legend pos=south east,
                legend cell align=left,
                legend columns=1
              ]
                \addplot+[] table [col sep=comma, header=true, x expr=\coordindex*0.01, y=FTQT_mean] {Acr_50d_max20.csv};
                \addlegendentry{Flowtree}
                \addplot+[] table [col sep=comma, header=true, x expr=\coordindex*0.01, y=FTKD_mean] {Acr_50d_max20.csv};
                \addlegendentry{kd-Flowtree}
            \end{axis}
        \end{tikzpicture}
    \end{subfigure}
    \hfill
    \begin{subfigure}{0.32\textwidth}
        \centering
        \begin{tikzpicture}
            \begin{axis}[
                width=\columnwidth,
                height=0.17\textheight,
                xlabel={$r$},
                ylabel={Recall@$k$},
                title={20NEWS ($D=50$, depth limit = 50)},
                xmin=0, xmax=1,
                ymin=0,
                xtick={0,0.2,0.4,0.6,0.8,1},
                ymajorgrids=true,
                grid style=dashed,
                legend pos=south east,
                legend cell align=left,
                legend columns=1
              ]
                \addplot+[] table [col sep=comma, header=true, x expr=\coordindex*0.01, y=FTQT_mean] {Acr_50d_max50.csv};
                \addlegendentry{Flowtree}
                \addplot+[] table [col sep=comma, header=true, x expr=\coordindex*0.01, y=FTKD_mean] {Acr_50d_max50.csv};
                \addlegendentry{kd-Flowtree}
            \end{axis}
        \end{tikzpicture}
    \end{subfigure}
    \vspace{-20pt}
    \caption{Recall@$k$ change with different depth limits}
    \label{fig:exp1-5d}
\end{figure}

First, looking at Recall@$k$, Sinkhorn achieves the highest accuracy across all datasets, followed by the proposed kd-Flowtree. 
For 20NEWS, when $D$ is large, $1$-Greedy approaches kd-Flowtree’s performance; however, for AMAZON and BBC, kd-Flowtree consistently outperforms the other methods. 
Thus, kd-Flowtree provides stable and high approximation accuracy regardless of embedding dimensionality.  
Although the previous section indicated that search accuracy depends on $D$, in practice, when $D=200$, accuracy slightly degrades compared to $D=50$ or $100$, narrowing the gap with existing methods. 
Nevertheless, kd-Flowtree still maintains superior performance.

Next, examining runtimes excluding preprocessing, Flowtree, kd-Flowtree, and $1$-Greedy complete searches the fastest. 
As noted, kd-Flowtree may incur slightly longer query time than Flowtree due to deeper trees, but the difference is negligible. 
For total runtime including preprocessing, kd-Flowtree and $1$-Greedy(R) are the fastest. In particular, $1$-Greedy(R) shows almost no overhead between total and query‐only time, indicating low preprocessing cost. 
In contrast, $1$-Greedy(C) requires more time than even Exact, revealing that clustering‐tree construction is computationally expensive.  
Moreover, the gap in total runtime between kd-Flowtree and Flowtree widens as $D$ increases, highlighting the advantage of kd‐trees in high‐dimensional spaces.   

These results indicate that kd-Flowtree not only exceeds previous methods in approximation accuracy but also maintains competitive runtimes and low preprocessing cost. 
While $1$-Greedy(C) can rival kd-Flowtree in query speed, its high preprocessing cost makes it less practical for large‐scale datasets. 
Moreover, compared to $1$-Greedy(R), kd-Flowtree yields more stable and higher accuracy for comparable runtimes.  

\subsection{Effect of Tree Depth Limit on Nearest‐Neighbor Search Accuracy}
Figure~\ref{fig:exp1-5d} shows nearest‐neighbor search results using Flowtree and kd-Flowtree as we vary the maximum tree depth. 
Each subplot corresponds to a different depth limit; the vertical axis represents the ratio of $k$ to the dataset size, and the horizontal axis shows Recall@$k$.

As the depth limit increases, Flowtree’s approximation accuracy remains nearly constant, whereas kd-Flowtree’s accuracy improves. 
Since the average support size for 20NEWS is about 85, a quadtree in 50 dimensions typically completes subdivision in a single split—except in rare cases of duplicate points. 
Therefore, increasing the depth limit does not change the quadtree structure or search accuracy. In contrast, kd‐trees continue to split deeply even in high dimensions. 
With a very low depth limit, construction is prematurely truncated and accuracy suffers; as the limit increases, tree construction proceeds appropriately and accuracy improves.

When the limit is 2, the kd‐tree barely grows, resulting in lower accuracy than Flowtree. 
At a limit of 5, the tree grows sufficiently to match Flowtree’s performance. 
Further increasing the limit allows kd-Flowtree to surpass Flowtree, but beyond a limit of 8, additional depth yields only marginal gains, suggesting that the tree is effectively complete. 
For example, if the number of points to split is roughly twice the average support size (around 190), then $2^8=256$ regions suffice to isolate all points by depth 8. 
Thus, in practice, setting the kd‐tree depth limit to slightly above twice the average support size is sufficient for high accuracy.

These results demonstrate that the performance of any Algorithm~\ref{alg:flow}–based method is highly sensitive to tree depth: insufficient depth degrades approximation accuracy. 
In particular, quadtrees remain very shallow in high‐dimensional spaces, while kd‐trees can grow deep, explaining kd-Flowtree’s superior accuracy observed in Experiment 1.  

\section{Conclusion}
In this paper, we proposed \emph{kd-Flowtree}, a kd‐tree–based approximation method for the Wasserstein distance.  
The accuracy of nearest‐neighbor search with the Flowtree algorithm improves as the depth of the constructed tree increases.  
Since kd‐trees grow deep even in high dimensions—unlike quadtrees—kd-Flowtree achieves superior search performance in high‐dimensional spaces.  
We also proved that kd-Flowtree’s search accuracy is independent of the dataset size.  

Our results suggest the effectiveness of leveraging kd‐trees in optimal transport research. 
Although kd‐tree split positions depend on data distribution—requiring careful consideration in theoretical analyses—the construction procedure is simple, runs quickly, and is well suited to high dimensions. 
We hope this idea will find application beyond nearest‐neighbor search in other domains of optimal transport.   

\bibliographystyle{unsrtnat}
\bibliography{reference}

@inproceedings{arjovsky2017wasserstein,
  title={Wasserstein generative adversarial networks},
  author={Arjovsky, Martin and Chintala, Soumith and Bottou, L{\'e}on},
  booktitle={International conference on machine learning},
  pages={214--223},
  year={2017},
  organization={PMLR}
}

@article{courty2017joint,
  title={Joint distribution optimal transportation for domain adaptation},
  author={Courty, Nicolas and Flamary, R{\'e}mi and Habrard, Amaury and Rakotomamonjy, Alain},
  journal={Advances in neural information processing systems},
  volume={30},
  year={2017}
}

@inproceedings{damodaran2018deepjdot,
  title={Deepjdot: Deep joint distribution optimal transport for unsupervised domain adaptation},
  author={Damodaran, Bharath Bhushan and Kellenberger, Benjamin and Flamary, R{\'e}mi and Tuia, Devis and Courty, Nicolas},
  booktitle={Proceedings of the European conference on computer vision (ECCV)},
  pages={447--463},
  year={2018}
}

@article{schefzik2021fast,
  title={Fast identification of differential distributions in single-cell RNA-sequencing data with waddR},
  author={Schefzik, Roman and Flesch, Julian and Goncalves, Angela},
  journal={Bioinformatics},
  volume={37},
  number={19},
  pages={3204--3211},
  year={2021},
  publisher={Oxford University Press}
}

@article{huizing2022optimal,
  title={Optimal transport improves cell--cell similarity inference in single-cell omics data},
  author={Huizing, Geert-Jan and Peyr{\'e}, Gabriel and Cantini, Laura},
  journal={Bioinformatics},
  volume={38},
  number={8},
  pages={2169--2177},
  year={2022},
  publisher={Oxford University Press}
}

@article{tabak2020correcting,
  title={Correcting nuisance variation using Wasserstein distance},
  author={Tabak, Gil and Fan, Minjie and Yang, Samuel and Hoyer, Stephan and Davis, Geoffrey},
  journal={PeerJ},
  volume={8},
  pages={e8594},
  year={2020},
  publisher={PeerJ Inc.}
}

@article{zhang2022ideas,
  title={IDEAS: individual level differential expression analysis for single-cell RNA-seq data},
  author={Zhang, Mengqi and Liu, Si and Miao, Zhen and Han, Fang and Gottardo, Raphael and Sun, Wei},
  journal={Genome biology},
  volume={23},
  number={1},
  pages={33},
  year={2022},
  publisher={Springer}
}

@article{li2023mdwgan,
  title={Mdwgan-gp: data augmentation for gene expression data based on multiple discriminator WGAN-GP},
  author={Li, Rongyuan and Wu, Jingli and Li, Gaoshi and Liu, Jiafei and Xuan, Junbo and Zhu, Qi},
  journal={BMC bioinformatics},
  volume={24},
  number={1},
  pages={427},
  year={2023},
  publisher={Springer}
}

@inproceedings{kusner2015word,
  title={From word embeddings to document distances},
  author={Kusner, Matt and Sun, Yu and Kolkin, Nicholas and Weinberger, Kilian},
  booktitle={International conference on machine learning},
  pages={957--966},
  year={2015},
  organization={PMLR}
}

@article{huang2016supervised,
  title={Supervised word mover's distance},
  author={Huang, Gao and Guo, Chuan and Kusner, Matt J and Sun, Yu and Sha, Fei and Weinberger, Kilian Q},
  journal={Advances in neural information processing systems},
  volume={29},
  year={2016}
}

@article{frogner2015learning,
  title={Learning with a Wasserstein loss},
  author={Frogner, Charlie and Zhang, Chiyuan and Mobahi, Hossein and Araya, Mauricio and Poggio, Tomaso A},
  journal={Advances in neural information processing systems},
  volume={28},
  year={2015}
}

@inproceedings{zhang2020deepemd,
  title={DeepEMD: Few-shot image classification with differentiable earth mover's distance and structured classifiers},
  author={Zhang, Chi and Cai, Yujun and Lin, Guosheng and Shen, Chunhua},
  booktitle={Proceedings of the IEEE/CVF conference on computer vision and pattern recognition},
  pages={12203--12213},
  year={2020}
}

@inproceedings{phan2022deepface,
  title={DeepFace-EMD: Re-ranking using patch-wise earth mover's distance improves out-of-distribution face identification},
  author={Phan, Hai and Nguyen, Anh},
  booktitle={Proceedings of the IEEE/CVF Conference on Computer Vision and Pattern Recognition},
  pages={20259--20269},
  year={2022}
}

@inproceedings{pele2009fast,
  title={Fast and robust earth mover's distances},
  author={Pele, Ofir and Werman, Michael},
  booktitle={2009 IEEE 12th international conference on computer vision},
  pages={460--467},
  year={2009},
  organization={IEEE}
}

@article{cuturi2013sinkhorn,
  title={Sinkhorn distances: Lightspeed computation of optimal transport},
  author={Cuturi, Marco},
  journal={Advances in neural information processing systems},
  volume={26},
  year={2013}
}

@article{le2019tree,
  title={Tree-sliced variants of Wasserstein distances},
  author={Le, Tam and Yamada, Makoto and Fukumizu, Kenji and Cuturi, Marco},
  journal={Advances in neural information processing systems},
  volume={32},
  year={2019}
}

@inproceedings{backurs2020scalable,
  title={Scalable nearest neighbor search for optimal transport},
  author={Backurs, Arturs and Dong, Yihe and Indyk, Piotr and Razenshteyn, Ilya and Wagner, Tal},
  booktitle={International Conference on machine learning},
  pages={497--506},
  year={2020},
  organization={PMLR}
}

@inproceedings{houry2024fast,
  title={Fast 1-Wasserstein distance approximations using greedy strategies},
  author={Houry, Guillaume and Bao, Han and Zhao, Han and Yamada, Makoto},
  booktitle={International Conference on Artificial Intelligence and Statistics},
  pages={325--333},
  year={2024},
  organization={PMLR}
}

@article{bentley1975multidimensional,
  title={Multidimensional binary search trees used for associative searching},
  author={Bentley, Jon Louis},
  journal={Communications of the ACM},
  volume={18},
  number={9},
  pages={509--517},
  year={1975},
  publisher={ACM New York, NY, USA}
}

@article{bern1993approximate,
  title={Approximate closest-point queries in high dimensions},
  author={Bern, Marshall},
  journal={Information Processing Letters},
  volume={45},
  number={2},
  pages={95--99},
  year={1993},
  publisher={Elsevier}
}

@inproceedings{pennington2014glove,
  title={Glove: Global vectors for word representation},
  author={Pennington, Jeffrey and Socher, Richard and Manning, Christopher D},
  booktitle={Proceedings of the 2014 conference on empirical methods in natural language processing (EMNLP)},
  pages={1532--1543},
  year={2014}
}

@article{flamary2021pot,
  author  = {R{\'e}mi Flamary and Nicolas Courty and Alexandre Gramfort and Mokhtar Z. Alaya and Aur{\'e}lie Boisbunon and Stanislas Chambon and Laetitia Chapel and Adrien Corenflos and Kilian Fatras and Nemo Fournier and L{\'e}o Gautheron and Nathalie T.H. Gayraud and Hicham Janati and Alain Rakotomamonjy and Ievgen Redko and Antoine Rolet and Antony Schutz and Vivien Seguy and Danica J. Sutherland and Romain Tavenard and Alexander Tong and Titouan Vayer},
  title   = {POT: Python Optimal Transport},
  journal = {Journal of Machine Learning Research},
  year    = {2021},
  volume  = {22},
  number  = {78},
  pages   = {1-8},
  url     = {http://jmlr.org/papers/v22/20-451.html}
}

@inproceedings{chen2024learning,
  title={Learning ultrametric trees for optimal transport regression},
  author={Chen, Samantha and Tabaghi, Puoya and Wang, Yusu},
  booktitle={Proceedings of the AAAI Conference on Artificial Intelligence},
  volume={38},
  number={18},
  pages={20657--20665},
  year={2024}
}

@inproceedings{le2024optimal,
  title={Optimal transport for measures with noisy tree metric},
  author={Le, Tam and Nguyen, Truyen and Fukumizu, Kenji},
  booktitle={International Conference on Artificial Intelligence and Statistics},
  pages={3115--3123},
  year={2024},
  organization={PMLR}
}

@inproceedings{takezawa2021supervised,
  title={Supervised tree-wasserstein distance},
  author={Takezawa, Yuki and Sato, Ryoma and Yamada, Makoto},
  booktitle={International Conference on Machine Learning},
  pages={10086--10095},
  year={2021},
  organization={PMLR}
}

@book{villani2008optimal,
  title={Optimal transport: old and new},
  author={Villani, C{\'e}dric and others},
  volume={338},
  year={2008},
  publisher={Springer}
}

@article{rubner2000earth,
  title={The earth mover's distance as a metric for image retrieval},
  author={Rubner, Yossi and Tomasi, Carlo and Guibas, Leonidas J},
  journal={International journal of computer vision},
  volume={40},
  number={2},
  pages={99--121},
  year={2000},
  publisher={Springer}
}

@book{rachev1998mass,
  title={Mass Transportation Problems: Volume I: Theory},
  author={Rachev, Svetlozar T and R{\"u}schendorf, Ludger},
  year={1998},
  publisher={Springer}
}

@article{finkel1974quad,
  title={Quad trees a data structure for retrieval on composite keys},
  author={Finkel, Raphael A and Bentley, Jon Louis},
  journal={Acta informatica},
  volume={4},
  number={1},
  pages={1--9},
  year={1974},
  publisher={Springer}
}

@article{faloutsos2002analysis,
  title={Analysis of the n-dimensional quadtree decomposition for arbitrary hyperrectangles},
  author={Faloutsos, Christos and Jagadish, HV and Manolopoulos, Yannis},
  journal={IEEE Transactions on Knowledge and Data Engineering},
  volume={9},
  number={3},
  pages={373--383},
  year={2002},
  publisher={IEEE}
}

@article{kratochvil2020generalized,
  title={Generalized EmbedSOM on quadtree-structured self-organizing maps},
  author={Kratochv{\'\i}l, Miroslav and Koladiya, Abhishek and Vondr{\'a}{\v{s}}ek, Ji{\v{r}}{\'\i}},
  journal={F1000Research},
  volume={8},
  pages={2120},
  year={2020}
}

@article{mikolov2013distributed,
  title={Distributed representations of words and phrases and their compositionality},
  author={Mikolov, Tomas and Sutskever, Ilya and Chen, Kai and Corrado, Greg S and Dean, Jeff},
  journal={Advances in neural information processing systems},
  volume={26},
  year={2013}
}

@inproceedings{devlin2019bert,
  title={Bert: Pre-training of deep bidirectional transformers for language understanding},
  author={Devlin, Jacob and Chang, Ming-Wei and Lee, Kenton and Toutanova, Kristina},
  booktitle={Proceedings of the 2019 conference of the North American chapter of the association for computational linguistics: human language technologies, volume 1 (long and short papers)},
  pages={4171--4186},
  year={2019}
}

@article{friedman1977algorithm,
  title={An algorithm for finding best matches in logarithmic expected time},
  author={Friedman, Jerome H and Bentley, Jon Louis and Finkel, Raphael Ari},
  journal={ACM Transactions on Mathematical Software (TOMS)},
  volume={3},
  number={3},
  pages={209--226},
  year={1977},
  publisher={ACM New York, NY, USA}
}

@article{samet1984quadtree,
  title={The quadtree and related hierarchical data structures},
  author={Samet, Hanan},
  journal={ACM Computing Surveys (CSUR)},
  volume={16},
  number={2},
  pages={187--260},
  year={1984},
  publisher={ACM New York, NY, USA}
}

\end{document}